\definecolor{dullmagenta}{rgb}{0.4,0,0.4}   
\definecolor{darkblue}{rgb}{0,0,0.4}
\newtheorem{theorem}{Theorem}
\newtheorem{lemma}[theorem]{Lemma}
\newtheorem{remark}[theorem]{Remark}
\newcommand{\Pain}[1]{\text{P}_{\mathrm{#1}}}
\newcommand{\dPain}[1]{\text{P}\left(\mathrm{#1}\right)}
\theoremstyle{definition}
\theoremstyle{remark}
\numberwithin{equation}{section}
\begin{document}

{\noindent\Large\bf On the Recurrence Coefficients for the $q$-Laguerre Weight and Discrete Painlev\'e Equations}
\phantom{AAA}\medskip
\\
\textbf{Jie Hu}\\
Department of Mathematics, Jinzhong University, Yuci District, Jinzhong, Shanxi, China\\
E-mail: \href{mailto:hujie_0610@163.com.}{\texttt{hujie\_0610@163.com}}\\[5pt]
\textbf{Anton Dzhamay}\footnote{Corresponding author}\\
Beijing Institute of Mathematical Sciences and Applications (BIMSA), Beijing, China and \\ 
School of Mathematical Sciences, The University of Northern Colorado, Greeley, CO 80526, USA\\
E-mail: \href{mailto:adzham@bimsa.cn}{\texttt{adzham@bimsa.cn}}\\[5pt]
\textbf{Yang Chen}\\
Faculty of Science and Technology, Department of Mathematics, University of Macau,
 Avenida da Universidade, Taipa, Macau, China\\
E-mail: \href{mailto:chenyayang57@gmail.com}{\texttt{chenyayang57@gmail.com}}\\[5pt]

\date{\today}

\begin{abstract}
	We study the dependence of recurrence coefficients in the three-term recurrence relation for orthogonal polynomials
	with a certain deformation of the $q$-Laguerre weight on the degree parameter $n$. We show that this dependence is 
	described by a discrete Painlev\'e equation on the family of $A_{5}^{(1)}$ Sakai surfaces, but this equation is
	different from the standard examples of discrete Painlev\'e equations of this type and instead is a composition 
	of two such. This case study is a good illustration of the effectiveness of a recently proposed geometric 
	identification scheme for discrete Painlev\'e equations.  
\end{abstract}

\begin{flushleft}
\emph{Keywords}: orthogonal polynomials, Painlev\'e equations, difference equations,
birational transformations.\\[3pt]
\textbf{{MSC2020}}: \emph{Primary:}
33C45, 
34M55, 
14E07; 
\emph{Secondary}:
39A45, 
33D45, 
39A13 
\end{flushleft}

\section{Introduction } 
\label{sec:introduction}

Orthogonal polynomials play an important role in many fields of Mathematics and Mathematical Physics, such as the Random Matrix Theory \cite{Meh:2004:RM}, Approximation Theory, Stochastic Processes, and others. They form a natural basis for expansions of solutions of partial differential or difference equations. There are many connections between orthogonal polynomials and Painlev\'{e} equations, see the recent monograph \cite{Van:2018:OPPE} and the references therein. 
In particular, for semi-classical weights, coefficients of the three-term recurrence relation for discrete and continuous orthogonal polynomials with respect to the semi-classical weight satisfy Painlev\'{e}-type equations, that can be both differential equations w.r.t.~one of the parameters in the weight function, 
\cite{Mag:1995:PDERCSOP}, or discrete w.r.t.~the degree $n$ of the polynomial, see many examples in \cite{Van:2022:OPTLPE}.

One common issue in working with discrete or differential Painlev\'e equations is that of the coordinates --- the same dynamic can take a very 
simple or a very complicated form based on the choice of a coordinate system. This also makes matching the equation with some of the standard examples a very nontrivial task. Arguably the most effective approach to this problem is provided by the algebro-geometric approach to Painlev\'e equations that is primarily due to H.~Sakai, \cite{Sak:2001:RSAWARSGPE}. In this approach the coordinates can be essentially removed from the picture and the equation corresponds to a conjugacy class of a translation element in some affine Weyl group (the symmetry group of the equation). From that point of view, different choices of coordinates correspond to different geometric realizations of the configuration spaces for the same dynamic. The identification question can then be answered on the purely algebraic level, which can be further extended to the construction of an explicit change of coordinates reducing the equation in question to some standard form. 

This procedure was formalized in \cite{DzhFilSto:2020:RCDOPWHWDPE}, and since then a few more examples were considered using this approach, e.g.,  \cite{LiDzhFilZha:2022:RRGLCOPDPEDSS}, \cite{DzhFilSto:2022:DERCSOPTRPEGA}, as well as our previous paper \cite{HuDzhChe:2020:PLUEDPE} where 
we considered the Laguerre unitary ensemble defined by the weight $w(x;\alpha) = x^{\alpha}e^{-x}$, over $[0,t]$ and showed that 
its recurrence coefficients satisfy one of the standard difference Painlev\'e equations on the $D_{5}^{(1)}$-family of Sakai surfaces.  In the present 
paper we consider the $q$-difference case defined by the deformed $q$-Laguerre weight \eqref{eq:q-weight}, supported on $[0,\infty)]$, 
introduced by Chen and Griffin \cite{CheGri:2015:LDEAFDQW}, 
\begin{equation}\label{eq:q-weight}
w(x,\alpha,t;q)=\frac{x^{\alpha}}{((q-1)x; q)_{\infty}\left((q-1)\frac{t}{x};q\right)_{\infty}},\;\;\;t\geq0,\;\alpha>-1,\;0<q<1,
\end{equation}
where $(a;q)_{\infty}$ is the usual $q$-Pochhammer symbol
\begin{align*}
  (a;q)_{\infty}:=\prod_{j=0}^{\infty}(1-a q^{j}).
\end{align*}

The physical motivation to consider such a weight is the following. First, in the limit as $t\rightarrow0^{+}$ this weight 
reduces to the $q$-Laguerre weight \cite{Moa:1981:QLP}, and so the weight \eqref{eq:q-weight}
is a one-parameter deformation of this $q$-Laguerre weight. If we set $\alpha=0$ and  
$t={q}/{(1-q)^2}$, the corresponding orthogonal polynomials are the Stieltjes-Wigert polynomials and the weight 
essentially becomes log-normal, i.e., $w(x) =  \exp(-c(\ln(x))^2)$, where $c>0$, which is related to a physical 
problem of localization. Moreover, this weight generates the indeterminate moment problems, namely, the same set of 
moments are generated by different weights. For $\alpha\neq0$, the corresponding orthogonal polynomials were considered 
by Askey \cite{Ask:1989:OPTF}. For more details, please see \cite{CheGri:2015:LDEAFDQW}, as well as \cite[Section 2]{CheLaw:1998:DZSOP}, 
and references therein.

Special cases of this weight are also known to be related with Painlev\'e and discrete Painlev\'e equations. In particular, 
in the limit as $q\rightarrow1^{-}$ this weight transforms to the weight $x^{\alpha}{e}^{-x}{e}^{-t/x}$, in which case the recurrence coefficients
are related to solutions of the Painlev\'{e} III equation w.r.t.~the $t$-variable, \cite{CheIts:2010:PSLSHRME}, and also satisfy one of the 
standard discrete Painlev\'e equations on the $D_{6}^{(1)}$ surface w.r.t.~discrete degree variable $n$, \cite{LiDzhFilZha:2022:RRGLCOPDPEDSS}.
And for the semiclassical variation of the $q$-Laguerre weight
\begin{equation}\label{eq:semical}
  \frac{x^{\alpha}(-p/{x^2}; q^2)_{\infty}}{(-x^2; q^2)_{\infty}\left(-q^2/x^2; q^2\right)_{\infty}}, x\in[0, \infty), p\in[0, q^{-\alpha}], a\geq0,
\end{equation}
and the $q$-analogue of the Laguerre weight $\frac{x^{\alpha}}{(-x^2; q^2)_{\infty}}$, the recurrence coefficients can be 
expressed in terms of solution of the $q$-Painlev\'{e} V equation \cite{BoeVan:2010:DPERCSLP,BoeVan:2015:VSQPTRC}.
Thus, understanding the type of discrete Painlev\'e equations governing the dynamic of recurrence coefficients for this generalized
weight is a very interesting question. In \cite{CheGri:2015:LDEAFDQW}, the authors suggested that this can be related to $\alpha q$-$\Pain{IV}$
or $\alpha q$-$\Pain{V}$ equations, but this turns out not to be the case. In fact, as we show below, the resulting recurrence
corresponds to a new discrete Painlev\'e equation that is a combination of two standrad dynamics It is very difficult to see such identifications
without using the geometric approach proposed in \cite{DzhFilSto:2020:RCDOPWHWDPE}.

Let us now briefly review the derivation of the recurrence relation that we are interested in, following \cite{CheGri:2015:LDEAFDQW}. 
Consider a family $\{P_{n}(x)\}$ of \emph{monic} 
polynomials of degree $n$ that are orthogonal with respect to the deformed $q$-Laguerre weight function $w(x,\alpha,t;q)$ (\ref{eq:q-weight}) on $[0,\infty)$:
\begin{equation*}
  \int_{0}^{\infty}P_{m}(x)P_{n}(x)w(x,\alpha,t; q)dx=\delta_{m,n}h_{n},
\end{equation*}
where $\delta_{i,j}$ is the Kronecker delta and $h_{n}$ is the square of the $L^{2}$ norm of $P_{n}(x)$ w.r.t.~this weight. It is well-known that
orthogonal polynomials satisfy the three-term recurrence relation
\begin{equation*}
xP_n(x)=P_{n+1}(x)+\alpha_nP_n(x)+\beta_{n}P_{n-1}(x),
\end{equation*} with initial conditions $P_{-1}(x)=0$ and $P_0(x)=1$. It is convenient to parameterize the 
recurrence coefficients $\alpha_{n}$ and $\beta_{n}$ that we are interested in using auxiliary variables $R_n$ and $r_n$ by
\begin{align*}
 q^{2n+\alpha}\alpha_n=&\frac{1-q^n}{1-q}+\frac{1-q^{n+\alpha+1}}{q(1-q)}+q^{n-1}t\big(R_n+(1-q)\sum_{j=0}^{n-1}R_j\big),\\
q^{2n-1}\beta_n=&\frac{1}{q^{2\alpha+2n}}\frac{1-q^n}{1-q}\frac{1-q^{n+\alpha}}{1-q}+\frac{1-q^n}{q^{\alpha+1}} t+\frac{q^n}{q^{\alpha+1}} t  r_n + \frac{1}{q^{2\alpha+n+1}}t\sum_{j=0}^{n-1}R_j,
\end{align*}
where
\begin{align*}
 &R_n=\frac{1}{h_n}\int_{0}^{\infty}P_{n}(y)P_{n}(y/q)\frac{w(y,\alpha,t;q)}{y}{d}y, \\
 &r_n=\frac{1}{h_{n-1}}\int_{0}^{\infty}P_{n}(y)P_{n-1}(y/q)\frac{w(y,\alpha,t;q)}{y}{d}y.
\end{align*}

The relation with discrete Painlev\'e equations is then given by the following theorems. First, there are the recurrence relations obtained by 
Chen and Griffin that describe the evolution, after yet another reparameterization, of the variables $R_{n}$ and $r_{n}$.

\begin{theorem}[{\cite[Theorem 1.4]{CheGri:2015:LDEAFDQW}}]\label{thm:recurrence}
Let 
\begin{align*}
  x_{n}=\frac{q^{n+\alpha}(1-q)}{R_n}, \quad y_{n}=q^{n}(1-r_n),\quad T=\frac{(1-q)^2}{q}t,\quad Q=q^{\alpha}.
\end{align*}
Then the quantities $x_{n}$ and $y_n$ satisfy the following system of difference equations:
\begin{equation}\label{eq:qxn-back-for}
\left\{
  \begin{aligned}
	&(x_{n}y_{n}-1)(x_{n-1}y_{n}-y_{n}) = q^{2n}Q\;T\frac{(y_{n}- 1)(y_{n}-1/T)}{q^{n}-y_{n}},\\
	&(x_{n}y_{n}-1)(x_{n}y_{n+1}-1) = - q^{2n+1}Q\frac{(x_n-1)(x_n-T)}{x_{n}}.
\end{aligned}
\right.
\end{equation}
\end{theorem}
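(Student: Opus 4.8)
The natural tool here is the ladder-operator (structure-relation) formalism for $q$-semiclassical orthogonal polynomials, adapted from the Chen--Ismail approach to the $q$-difference setting. The plan is to turn the analytic definitions of $R_n$ and $r_n$ into a closed coupled recurrence and then check that the stated substitution linearizes the bookkeeping into the factorized form \eqref{eq:qxn-back-for}. The first step is to record the $q$-Pearson functional equation satisfied by the weight \eqref{eq:q-weight}. A direct telescoping of the two $q$-Pochhammer symbols gives
\[
\frac{w(x,\alpha,t;q)}{w(x/q,\alpha,t;q)} = q^{\alpha}\,\frac{1-\frac{q-1}{q}x}{1-(q-1)t/x},
\]
so that $w(x)$ and $w(x/q)$ differ by an explicit rational factor. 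Because $w$ decays at both endpoints of $[0,\infty)$ for $\alpha>-1$ and $0<q<1$, every boundary contribution in the $q$-integration by parts below vanishes, which is precisely what makes the argument go through.

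Next I would derive the two structure relations. Expanding the $q$-shifted polynomial $P_n(y/q)$ against the orthogonal basis $\{P_j(y)\}$ and inserting the Pearson factor above, one obtains lowering/raising relations in which the coefficients are exactly the quantities $R_n$ and $r_n$; this is the role of the two defining integrals $\frac1{h_n}\int_0^\infty P_n(y)P_n(y/q)\frac{w}{y}\,dy$ and $\frac1{h_{n-1}}\int_0^\infty P_n(y)P_{n-1}(y/q)\frac{w}{y}\,dy$, which are the $q$-analogues of the Chen--Ismail ladder coefficients. Combining these with the three-term recurrence $xP_n=P_{n+1}+\alpha_nP_n+\beta_nP_{n-1}$ produces two compatibility identities, the $q$-analogues of the standard $(S_1)$ and $(S_2)$ sum rules, that express $\alpha_n$ and $\beta_n$ through $R_n$, $r_n$ and the partial sums $\sum_{j<n}R_j$. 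These are exactly the parameterizations recorded before the theorem, which I may assume; reading them as constraints and eliminating the partial sums by taking consecutive differences yields a closed coupled system in $(R_n,r_n)$, one relation of backward type (relating $R_{n-1},R_n,r_n$) and one of forward type (relating $R_n,r_n,r_{n+1}$), matching the labelling of \eqref{eq:qxn-back-for}.

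Finally I would substitute $x_{n}=q^{n+\alpha}(1-q)/R_n$, $y_{n}=q^{n}(1-r_n)$, $T=(1-q)^2t/q$, and $Q=q^{\alpha}$, and verify that the coupled recurrence collapses to \eqref{eq:qxn-back-for}. The main obstacle is this last algebraic step together with the derivation of the compatibility identities: the $q$-shifts, the geometric weights $q^{n}$ and $q^{2n}$, and the telescoping of the partial sums must be tracked with great care, and the real content is that after the change of variables the right-hand sides factor cleanly as $q^{2n}QT\,(y_n-1)(y_n-1/T)/(q^{n}-y_n)$ and $-q^{2n+1}Q\,(x_n-1)(x_n-T)/x_n$, rather than collapsing into some generic rational expression. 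Establishing that these particular factorizations emerge is where essentially all of the difficulty lies, and it is exactly this hidden factorized structure that makes possible the geometric identification with the $A_5^{(1)}$ Sakai surfaces carried out later in the paper.
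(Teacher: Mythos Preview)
The paper does not prove this theorem; it is quoted from Chen and Griffin \cite[Theorem~1.4]{CheGri:2015:LDEAFDQW} and used as input for the geometric analysis that follows. Your outline matches the method of that reference: the $q$-Pearson relation you wrote is correct, and the derivation there proceeds through $q$-ladder operators and the $q$-analogues of the $(S_1)$--$(S_2)$ compatibility conditions exactly as you sketch, followed by elimination of the partial sums and the indicated substitution.

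That said, what you have written is a plan rather than a proof. The closed coupled system in $(R_n,r_n)$ is asserted but not written down, and the verification that the substitution $x_n=q^{n+\alpha}(1-q)/R_n$, $y_n=q^n(1-r_n)$ produces precisely the factorizations on the right of \eqref{eq:qxn-back-for} is deferred with ``I would substitute \ldots\ and verify.'' Since you yourself identify this step as ``where essentially all of the difficulty lies,'' the proposal does not yet establish the theorem; it correctly locates the argument but leaves the substantive computations undone.
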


Our main result is the identification of this dynamic with a discrete Painlev\'e equation on the $A_{5}^{(1)}$-family of Sakai surfaces 
(that is different from possible discrete Painlev\'e equations suggested in \cite{CheGri:2015:LDEAFDQW}). 
Recall that  the full symmetry group of the $A_{5}^{(1)}$-family of Sakai surfaces is the extended affine Weyl group of type 
$E_{3}^{(1)}$ (that is sometimes also denoted by $(A_{2}+A_{1})^{(1)}$). This group is described by a \emph{disconnected} Dynkin diagram shown on 
Figure~\ref{fig:dynkin-diags} (on the right).
	\begin{figure}[ht]
	    \begin{equation*}\label{eq:std-symm-roots}			
		    \raisebox{-25.0pt}{\begin{tikzpicture}[elt/.style={circle, draw=black!100, thick, inner sep=0pt, minimum size=2mm}]
	            \begin{scope}[yshift=0.2cm]
					\node[draw=none,minimum size=1.7cm,regular polygon,regular polygon sides=6, rotate=30] (d) {};			
					\foreach \x in {1,2,...,6}
						\path (d.corner \x) node (d\x) [elt] {};
					 \draw [black] (d1) -- (d2) -- (d3) -- (d4) -- (d5) -- (d6) -- (d1);	
					 \node at ($(d1.north) + (0.0,0.2)$) {$\delta_{0}$};
					 \node at ($(d2.west) + (-0.2,0.0)$) {$\delta_{1}$};
					 \node at ($(d3.west) + (-0.2,0.0)$) {$\delta_{2}$};
					 \node at ($(d4.south) + (0.0,-0.2)$) {$\delta_{3}$};
					 \node at ($(d5.east) + (+.2,0.0)$) {$\delta_{4}$};
					 \node at ($(d6.east) + (+0.2,0.0)$) {$\delta_{5}$};
				\end{scope}					
	            \begin{scope}[xshift=5cm]
					\path (-0.6,-0.3) node (a1) [elt] {}
					( 0.6,-0.3) node (a2) [elt] {}
					( 0,0.67) node (a0) [elt] {}
					( 1.5,0.3) node (a3) [elt] {}
					( 2.5,0.3) node (a4) [elt] {};
		            \draw [black] (a1) -- (a2) -- (a0) -- (a1);
					\draw [black,double distance=2pt] (a3) -- (a4);
		            \node at ($(a1.west) + (-0.3,0.0)$) {$\alpha_{1}$};
		            \node at ($(a2.east) + (0.3,0.0)$) {$\alpha_{2}$};
		            \node at ($(a0.north) + (0,0.3)$) {$\alpha_{0}$};
		            \node at ($(a3.north) + (0,0.3)$) {$\alpha_{3}$};
		            \node at ($(a4.north) + (0,0.3)$) {$\alpha_{4}$};
				\end{scope}	
		    \end{tikzpicture}} 
	    \end{equation*}
		\caption{Affine Dynkin diagrams of type $A_{5}^{(1)}$ (left) and $E_{3}^{(1)}$ (right).} 
		\label{fig:dynkin-diags}
	\end{figure}
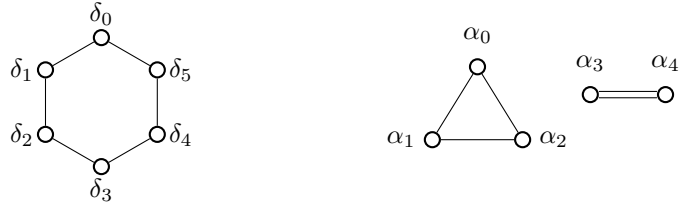

There are two standard examples of discrete Painlev\'e equations on this surface family that naturally correspond to translation elements
on one of the connected components of the Dynkin diagram $E_{3}^{(1)}$. The first translation acts on the symmetry 
root lattice as
\begin{equation}\label{eq:dP-A5-a-trans}
\psi_*:\upalpha=\langle\alpha_0,\alpha_1,\alpha_2;\alpha_3,\alpha_{4}\rangle\mapsto
\psi_*(\upalpha)=\upalpha+ \langle 0,0,0;-1,1 \rangle\delta,
\end{equation}
and so is a translation on the $A_{1}^{(1)}$-sublattice,  and the second one as
\begin{equation}\label{eq:dP-A5-b-trans}
\phi_*:\upalpha=\langle\alpha_0,\alpha_1,\alpha_2;\alpha_3,\alpha_{4}\rangle\mapsto
\phi_*(\upalpha)=\upalpha+ \langle 0,1,-1;0,0 \rangle\delta,
\end{equation}
and so is a translation on the $A_{2}^{(1)}$-sublattice.
Based on the translation vectors (actually, their equivalence classes), we use the notation $[000\overline{1}1]$ and $[01\overline{1}00]$ for these 
two translation dynamics. The dynamic $[000\overline{1}1]$ is also known as a $q$-$\Pain{IV}$ equation since it  has a continuous limit to the 
standard differential $\Pain{IV}$ equation, and for the similar reason the dynamic $[01\overline{1}00]$ is known as the $q$-$\Pain{III}$ equation,
see \cite{Sak:2001:RSAWARSGPE}. The existence of such continuous limits can be also seen from the degeneration cascade in the Sakai classification
scheme shown on Figure~\ref{fig:Sakai-clsc-surf}.

\begin{figure}[ht]{\small
	\begin{tikzpicture}[>=stealth,scale=0.935]
		\node (e8e) at (2,4) {$\left(A_{0}^{(1)}\right)^{\text{e}}$};
		\node (a1qa) at (16,4) {$\left(A_{7}^{(1)}\right)^{\text{q}}$};
		\node (e8q) at (2,2) {$\left(A_{0}^{(1)*}\right)^{\text{q}}$};
		\node (e7q) at (4,2) {$\left(A_{1}^{(1)}\right)^{\text{q}}$};
		\node (e6q) at (6,2) {$\left(A_{2}^{(1)}\right)^{\text{q}}$};
		\node (d5q) at (8,2) {$\left(A_{3}^{(1)}\right)^{\text{q}}$};
		\node (a4q) at (10,2) {$\left(A_{4}^{(1)}\right)^{\text{q}}$};
		\node (a21q) at (12,2) {$\left(A_{5}^{(1)}\right)^{\text{q}}$};
		\node (a11q) at (14,2) {$\left(A_{6}^{(1)}\right)^{\text{q}}$};
		\node (a1q) at (16,2) {$\left(A_{7}^{(1)}\right)^{\text{q}}$};
		\node (a0q) at (18,2) {$\left(A_{8}^{(1)}\right)^{\text{q}}$};
		\node (e8d) at (4,0) {$\left(A_{0}^{(1)**}\right)^{\text{d}}$};
		\node (e7d) at (6,0) {$\left(A_{1}^{(1)*}\right)^{\text{d}}$};
		\node (e6d) at (8,0) {$\left(A_{2}^{(1)*}\right)^{\text{d}}$};
		\node (d4d) at (10,0) {$\left(D_{4}^{(1)}\right)^{\text{d,c}}$};
		\node (a3d) at (12,0) {$\left(D_{5}^{(1)}\right)^{\text{d,c}}$};
		\node (a11d) at (14,0) {$\left(D_{6}^{(1)}\right)^{\text{d,c}}$};
		\node (a1d) at (16,0) {$\left(D_{7}^{(1)}\right)^{\text{d,c}}$};
		\node (a0d) at (18,0) {$\left(D_{8}^{(1)}\right)^{\text{d,c}}$};
		\node (a2d) at (14,-2) {$\left(E_{6}^{(1)}\right)^{\text{d,c}}$};
		\node (a1da) at (16,-2) {$\left(E_{7}^{(1)}\right)^{\text{d,c}}$};
		\node (a0da) at (18,-2) {$\left(E_{8}^{(1)}\right)^{\text{c}}$};
		\draw[->] (e8e) -> (e8q);	\draw[->] (a1qa) -> (a0d);
		\draw[->] (e8q) -> (e7q); 	\draw[->] (e8q) -> (e8d);
		\draw[->] (e7q) -> (e6q); 	\draw[->] (e7q) -> (e7d);
		\draw[->] (e6q) -> (d5q); 	\draw[->] (e6q) -> (e6d);
		\draw[->] (d5q) -> (a4q); 	\draw[->] (d5q) -> (d4d);
		\draw[->] (a4q) -> (a21q); 	\draw[->] (a4q) -> (a3d);
		\draw[->] (a21q) -> (a11q); \draw[->] (a21q) -> (a11d); \draw[->] (a21q) -> (a2d);
		\draw[->] (a11q) -> (a1q); 	\draw[->] (a11q) -> (a1d); 	\draw[->] (a11q) -> (a1qa); 	\draw[->] (a11q) -> (a1da);
		\draw[->] (a1q) -> (a0q); 	\draw[->] (a1q) -> (a0d);	\draw[->] (a1q) -> (a0da);
		\draw[->] (e8d) -> (e7d);
		\draw[->] (e7d) -> (e6d);
		\draw[->] (e6d) -> (d4d);
		\draw[->] (d4d) -> (a3d);
		\draw[->] (a3d) -> (a11d);	\draw[->] (a3d) -> (a2d);
		\draw[->] (a11d) -> (a1d);	\draw[->] (a11d) -> (a1da);
		\draw[->] (a1d) -> (a0d);	\draw[->] (a1d) -> (a0da);
		\draw[->] (a2d) -> (a1da);	\draw[->] (a1da) -> (a0da);
		\node [purple] at ($(a21q.north) + (0,0.2)$) {q-$P_{\text{IV}}$, q-$P_{\text{III}}$};
		\node [blue] at ($(d4d.south) + (0,-0.1)$) {$P_{\text{VI}}$};
		\node [blue] at ($(a3d.south) + (-0.3,-0.1)$) {$P_{\text{V}}$};
		\node [blue] at ($(a11d.south) + (0,-0.1)$) {$P_{\text{III}}$};
		\node [blue] at ($(a1d.south) + (0,-0.1)$) {$P_{\text{III}}$};
		\node [blue] at ($(a0d.south) + (-0.2,-0.1)$) {$P_{\text{III}}$};
		\node [blue] at ($(a2d.south) + (0,-0.1)$) {$P_{\text{IV}}$};
		\node [blue] at ($(a1da.south) + (0,-0.1)$) {$P_{\text{II}}$};
		\node [blue] at ($(a0da.south) + (0,-0.1)$) {$P_{\text{I}}$};
	\end{tikzpicture}}
	\caption{Surface-type classification scheme for Painlev\'e equations}
\label{fig:Sakai-clsc-surf}
\end{figure}
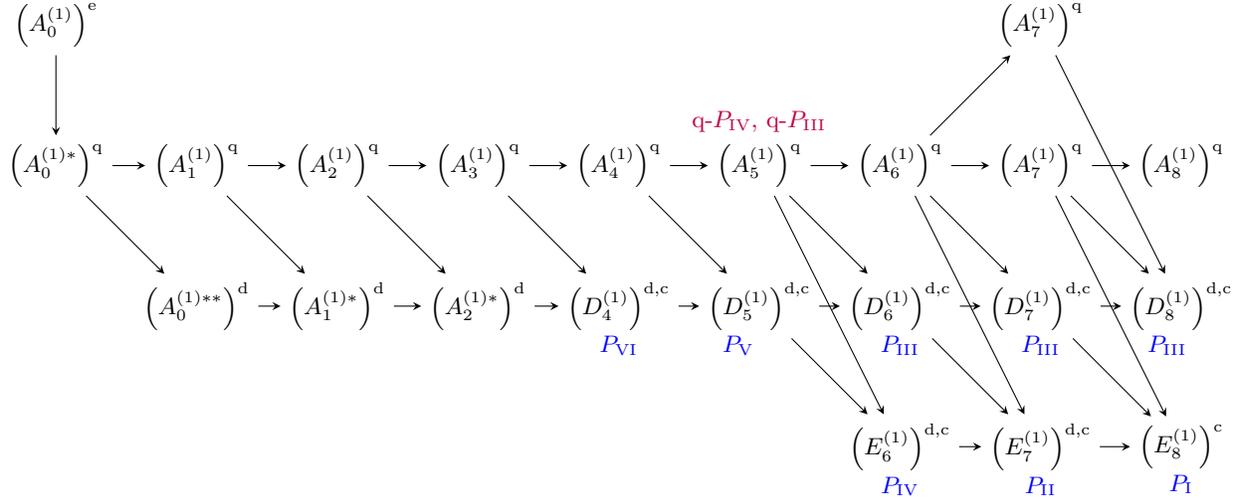

In \cite{KajNouYam:2017:GAPE} the authors gave a careful description of the standard realizations 
of Sakai surfaces for each of the cases shown on Figure~\ref{fig:Sakai-clsc-surf}, and it became
the standard reference on this subject. For the $A_{5}^{(1)}$-family of surfaces there are 
two such realizations: the (a)-model \cite[8.2.7]{KajNouYam:2017:GAPE} is better suited for the 
dynamic \eqref{eq:dP-A5-a-trans} and the (b)-model \cite[8.2.10]{KajNouYam:2017:GAPE} for the
dynamic \eqref{eq:dP-A5-b-trans}. Of course, both families are birationally equivalent. 
To make the paper self-contained, we 
briefly describe them in Section~\ref{sec:dP-A5-surf-std}.

In contrast to many other examples of discrete Painlev\'e equations that occur in the study of 
orthogonal polynomials, the recurrence \eqref{eq:qxn-back-for} is not conjugated
to any of these two standard equations and instead is essentially their composition with 
translations in both components. Namely, our main result is the following Theorem.

\begin{theorem}\label{thm:main} The discrete dynamic given by \eqref{eq:qxn-back-for} defines a discrete Painlev\'e equation 
	 $[01\overline{1}\overline{1}1]$	on the 
	$A_{5}^{(1)}$ family of Sakai surfaces that is equivalent to the composition $\psi\circ\phi = \phi\circ\psi$ of two standard mappings. 
\end{theorem}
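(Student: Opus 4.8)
The plan is to carry out the geometric identification scheme of \cite{DzhFilSto:2020:RCDOPWHWDPE}. First I would recast the coupled system \eqref{eq:qxn-back-for} as a single birational map $\varphi$ of a two-dimensional configuration space. Taking $(x_n,y_n)$ as affine coordinates on $\mathbb{P}^1\times\mathbb{P}^1$, the second equation of \eqref{eq:qxn-back-for} is linear in $y_{n+1}$ and so solves rationally for $y_{n+1}$ in terms of $(x_n,y_n)$; the first equation, with $n$ shifted to $n+1$ and after factoring $x_n y_{n+1}-y_{n+1}=y_{n+1}(x_n-1)$, is then linear in $x_{n+1}$ and solves rationally for $x_{n+1}$ in terms of $(x_n,y_{n+1})$. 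Composing the two gives $\varphi:(x_n,y_n)\mapsto(x_{n+1},y_{n+1})$. Two features are worth noting immediately: because of the explicit $q^{2n}$, $q^{n}$ factors the map is \emph{non-autonomous}, so it must be read as a map between consecutive members of a family of surfaces whose root variables depend on $n$ through $q^{n}$, $Q=q^{\alpha}$, and $T$; and the natural splitting of the system into an update of $y$ followed by an update of $x$ already exhibits $\varphi$ as a composition of two elementary half-steps, which is a strong heuristic for the eventual factorization $\psi\circ\phi$.

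Next I would regularize $\varphi$ by resolving its indeterminacy. I would locate the eight base points (including infinitely near ones) of the mapping, blow them up, and verify that $\varphi$ lifts to an isomorphism of the resulting rational surface $X$ with $\mathrm{Pic}(X)\cong\mathbb{Z}^{10}$. The decisive check is that the irreducible components of the anti-canonical divisor on $X$ intersect according to the affine Dynkin diagram of type $A_{5}^{(1)}$ (Figure~\ref{fig:dynkin-diags}, left), so that $X$ is a member of the $A_{5}^{(1)}$ Sakai family with symmetry lattice of type $E_{3}^{(1)}=(A_{2}+A_{1})^{(1)}$. In the course of this step I would also record the exact dictionary between the weight parameters and the root variables $\langle\alpha_0,\alpha_1,\alpha_2;\alpha_3,\alpha_4\rangle$, a dictionary that is forced by the positions of the base points.

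Then I would compute the induced linear action $\varphi^{*}$ on $\mathrm{Pic}(X)$ directly from the images of the exceptional divisors and of the coordinate lines under $\varphi$. Restricting the resulting isometry to the symmetry sublattice, I would read off its action on $\langle\alpha_0,\alpha_1,\alpha_2;\alpha_3,\alpha_4\rangle$ and extract the translation vector. The target is to show that this vector equals $\langle 0,1,-1;-1,1\rangle\delta$, i.e.\ the class $[01\overline{1}\overline{1}1]$, which is precisely the sum of the translation vectors of $\psi_{*}$ from \eqref{eq:dP-A5-a-trans} and $\phi_{*}$ from \eqref{eq:dP-A5-b-trans}. Since $\psi_{*}$ translates along the $A_{1}^{(1)}$ component and $\phi_{*}$ along the $A_{2}^{(1)}$ component of the \emph{disconnected} diagram, the two translations lie on orthogonal sublattices and therefore commute, giving $\psi\circ\phi=\phi\circ\psi$ and identifying $\varphi$ with this composition. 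As a final confirmation I would produce an explicit change of coordinates realizing $\varphi$ on the standard (a)- or (b)-model of Section~\ref{sec:dP-A5-surf-std}.

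The main obstacle I anticipate is the blow-up and base-point analysis in this non-autonomous setting: one must correctly follow the infinitely near base points and confirm that the configuration degenerates to \emph{exactly} $A_{5}^{(1)}$ and not to an adjacent surface type in the cascade of Figure~\ref{fig:Sakai-clsc-surf}, while simultaneously pinning down the parameter dictionary so that the evolution of the root variables under $n\mapsto n+1$ matches the computed translation. A secondary difficulty is bookkeeping: because the dynamic is a composition rather than an elementary translation, the translation vector is not one of the standard generators $[000\overline{1}1]$ or $[01\overline{1}00]$, and some care is needed to present $\varphi^{*}$ in a basis in which the decomposition $\psi\circ\phi$ becomes transparent.
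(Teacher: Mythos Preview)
Your proposal is correct and follows essentially the same route as the paper: resolve the base points of $\varphi$ on $\mathbb{P}^{1}\times\mathbb{P}^{1}$, verify the $A_{5}^{(1)}$ surface type, compute $\varphi_{*}$ on $\operatorname{Pic}(\mathcal{X})$, match root bases to the standard (b)-model (with an adjustment by a Dynkin automorphism to put the translation into the form $[01\overline{1}\overline{1}1]$), and then write down the explicit change of coordinates. One small caution: the paper explicitly checks (in a Remark) that the natural half-step decomposition $\varphi=\varphi_{2}^{-1}\circ\varphi_{1}$ does \emph{not} coincide with $\psi\circ\phi$ --- the backward half-step $\varphi_{2}$ already acts on both components of the $E_{3}^{(1)}$ diagram --- so your ``strong heuristic'' is in fact misleading, and the factorization into $\psi$ and $\phi$ really must be read off from the translation vector rather than from the two half-steps.
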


We prove this Theorem, as well as give the explicit form of the mapping \eqref{eq:qLag-std} on the standard (b)-model realization of the $A_{5}^{(1)}$-family 
and the explicit birational change of coordinates \eqref{eq:Charlier2Sakai-coord} transforming our dynamic to this form, in Section~\ref{sec:identification}.

It is easy to see that the mappings $\phi$, $\psi$, and $\varphi$ are non-conjugate. Indeed, for that it is enough to look at the 
Jordan block structure of the induced linear maps on the Picard lattice. For equation ${[000\overline{1}1]}$
	it is $J(-1,1)^{\oplus3}\oplus J(1,1)^{\oplus4}\oplus J(1,3)$, for equation ${[01\overline{1}00]}$ it is 
	$J(1,1)^{\oplus3}\oplus J(1,3)\oplus J(e^{2 \pi \mathfrak{i}/3},1)^{\oplus 2}\oplus J(e^{4 \pi \mathfrak{i}/3},1)^{\oplus 2}$,
	and for equation $[01\overline{1}\overline{1}1]$ it is 
	$J(-1,1)\oplus J(1,1)^{\oplus2}\oplus J(1,3)\oplus J(e^{ \pi \mathfrak{i}/3},1)
	\oplus J(e^{2 \pi \mathfrak{i}/3},1)\oplus J(e^{4 \pi \mathfrak{i}/3},1)\oplus J(e^{5 \pi \mathfrak{i}/3},1)$.

\begin{remark} Given that the equation ${[000\overline{1}1]}$ has a continuous limit to the 
standard differential $\Pain{IV}$ equation and the equation $[01\overline{1}00]$ has a continuous limit to the 
standard differential $\Pain{III}$ equation, it would be very interesting to see if the equation 
$[01\overline{1}\overline{1}1]$ has a good continuous limit. We plan to consider this question in the follow-up work.	
\end{remark}

The paper is organized as follows. In Section~\ref{sec:dP-A5-surf-std} we collect some basic data about the 
q-$\dPain{E_{3}^{(1)}/A_{5}^{(1)}}$ surface families, and in Section~\ref{sec:identification} we perform the detailed
study of recurrence \eqref{eq:qxn-back-for}, along the lines of \cite{DzhFilSto:2020:RCDOPWHWDPE}, 
and establish our main result. The final section is a brief conclusion. 

\section{Discrete Painlev\'e Equations on the q-$\dPain{E_{3}^{(1)}/A_{5}^{(1)}}$ Surfaces} 
\label{sec:dP-A5-surf-std}
To make the paper self-contained, we begin by briefly describing some basic geometric data for discrete Painlev\'e equations for the $A_{5}^{(1)}$ surface 
family, following \cite{KajNouYam:2017:GAPE}. We also give birational representations of the extended affine Weyl symmetry group 
$\tilde{W}\left(E_{3}^{(1)}\right)$, which is essentially the same as in \cite{KajNouYam:2017:GAPE}, but some maps differ by a choice of normalizations.
There are two natural realizations of the $A_{5}^{(1)}$-family via 
the configuration of the blow-up points: the (a)-model \cite[8.2.7]{KajNouYam:2017:GAPE} and the (b)-model \cite[8.2.10]{KajNouYam:2017:GAPE}.
We begin with the (a)-model.

\subsection{The q-$\dPain{E_{3}^{(1)}/A_{5}^{(1)};a}$ Surface Family} 
\label{sub:case-a}

This family corresponds to the choice of root bases for the surface and symmetry sub-lattices shown on 
Figure~\ref{fig:roots-bases-a}.
\begin{figure}[ht]
\begin{equation}\label{eq:d-roots-KNY-a}
	\raisebox{-45pt}{\begin{tikzpicture}[
			elt/.style={circle,draw=black!100,thick, inner sep=0pt,minimum size=2mm}]
			\path (-0.6,-0.3) node (a1) [elt] {}
			( 0.6,-0.3) node (a2) [elt] {}
			( 0,0.67) node (a0) [elt] {}
			( -0.6,-1.2) node (a3) [elt] {}
			( 0.6,-1.2) node (a4) [elt] {};
            \draw [black] (a1) -- (a2) -- (a0) -- (a1);
			\draw [black,double distance=2pt] (a3) -- (a4);
            \node at ($(a1.west) + (-0.3,0.0)$) {$\alpha_{1}$};
            \node at ($(a2.east) + (0.3,0.0)$) {$\alpha_{2}$};
            \node at ($(a0.north) + (0,0.3)$) {$\alpha_{0}$};
            \node at ($(a3.south) + (0,-0.3)$) {$\alpha_{3}$};
            \node at ($(a4.south) + (0,-0.3)$) {$\alpha_{4}$};			
	\end{tikzpicture}}\quad 
			\begin{aligned}
			\alpha_{0} &= \mathcal{H}_{f} + \mathcal{H}_{g} - \mathcal{E}_{2367},	\\		
			\alpha_{1} &= \mathcal{H}_{f} + \mathcal{H}_{g} - \mathcal{E}_{1468}, \\
			\alpha_{2} &= \mathcal{E}_{6} - \mathcal{E}_{5},  \\
			\alpha_{3} &=\mathcal{H}_{f} + 2 \mathcal{H}_{g} - \mathcal{E}_{123568}\\
			\alpha_{4} &= \mathcal{H}_{f}-\mathcal{E}_{47},\\
			\end{aligned}
	\qquad	\qquad 	
	\raisebox{-45pt}{\begin{tikzpicture}[
			elt/.style={circle,draw=black!100,thick, inner sep=0pt,minimum size=2mm}]
		\path 				(0,1)	node 	(d0) 	[elt, label=above:{$\delta_{0}$} ] {}
		        (-{sqrt(3)/2},1/2)	node 	(d1) 	[elt, label=left:{$\delta_{1}$} ] {}
		        (-{sqrt(3)/2},-1/2) node  	(d2)	[elt, label=left:{$\delta_{2}$} ] {}
		     	   			(0,-1)	node  	(d3) 	[elt, label=below:{$\delta_{3}$} ] {}
		        ({sqrt(3)/2},-1/2) 	node  	(d4) 	[elt, label=right:{$\delta_{4}$} ] {}
		        ({sqrt(3)/2},1/2) 	node 	(d5) 	[elt, label=right:{$\delta_{5}$} ] {};
		\draw [black,line width=1pt ] (d0) -- (d1) -- (d2) -- (d3) -- (d4) -- (d5)--(d0);
	\end{tikzpicture}}\quad 
			\begin{aligned}
			\delta_{0} &= \mathcal{H}_{f} - \mathcal{E}_{12},	\\		
			\delta_{1} &= \mathcal{E}_{2} - \mathcal{E}_{3}, \\
			\delta_{2} &= \mathcal{H}_{g} - \mathcal{E}_{24},  \\
			\delta_{3} &=\mathcal{H}_{f}- \mathcal{E}_{56},\\
			\delta_{4} &= \mathcal{H}_{g}-\mathcal{E}_{17},\\
			\delta_{5} &= \mathcal{E}_{1} - \mathcal{E}_{8};
			\end{aligned}
\end{equation}
	\caption{The symmetry (left) and the surface (right) root bases for the q-$\dPain{E_{3}^{(1)}/A_{5}^{(1)};a}$}
	\label{fig:roots-bases-a}
\end{figure}

\subsubsection{The point configuration} 
\label{ssub:points-a}

The decomposition of the anti-canonical divisor class into the classes of irreducible components $\delta_{i}$ above,
\begin{equation*}
	-\mathcal{K}_{\mathcal{X}} = [H_{f}-E_{1}-E_{2}] + [E_{2} - E_{3}] + [H_{g} - E_{2} - E_{4}] + [H_{f} - E_{5} - E_{6}] + [H_{g} - E_{1} - E_{7}] + 
	[E_{1} - E_{8}],
\end{equation*}
can be realized by the point configuration on Figure~\ref{fig:KNY-a-pt-conf}. In \cite{KajNouYam:2017:GAPE} all 
$q$-type configurations are parameterized using the same set of $10$ parameters $\kappa_{1},\kappa_{2},\nu_{1},\ldots,\nu_{8}$.
Using these parameters, in case (a) the points are
\begin{equation*}\label{eq:base-pt-KNY-a}
	\begin{aligned}
		&p_{1}\left(\frac{1}{f}=0,g=0\right)\leftarrow p_{8}\left(\frac{1}{f}=0,f g = -\frac{\kappa_{1}}{\nu_{1} \nu_{8}}\right),\ 
		p_{2}\left(\frac{1}{f}=0,\frac{1}{g}=0\right)\leftarrow p_{3}\left(\frac{1}{f}=0,\frac{f}{g} = -\nu_{2} \nu_{3}\right),\\
		&p_{4}\left(f = \nu_{4},\frac{1}{g}=0\right),\ p_{5}\left(f=0,g=\frac{\nu_{5}}{\kappa_{2}}\right),\ 
		p_{6}\left(f=0,g=\frac{\nu_{6}}{\kappa_{2}}\right),\ p_{7}\left(f=\frac{\kappa_{1}}{\nu_{7}},g=0\right).
	\end{aligned}
\end{equation*}
However, it is easy to see that in this case the true number of parameters is only \emph{four}: using the action of the gauge group of
M\"obius transformations we can make some of the point coordinates vanish, and there is still a residual two-parameter rescaling 
acting on the remaining six non-zero coordinates. As usual, it is convenient to consider the canonical parameters known as 
the \emph{root variables}.

\begin{figure}[ht]
	\begin{center}		
	\begin{tikzpicture}[>=stealth,basept/.style={circle, draw=red!100, fill=red!100, thick, inner sep=0pt,minimum size=1.2mm}]
	\begin{scope}[xshift=0cm,yshift=0cm]
	\draw [black, line width = 1pt] (-0.2,0) -- (3.2,0)	node [pos=0,left] {\small $H_{g}$} node [pos=1,right] {\small $p=0$};
	\draw [black, line width = 1pt] (-0.2,3) -- (3.2,3) node [pos=0,left] {\small $H_{g}$} node [pos=1,right] {\small $p=\infty$};
	\draw [black, line width = 1pt] (0,-0.2) -- (0,3.2) node [pos=0,below] {\small $H_{f}$} node [pos=1,xshift = -7pt, yshift=5pt] {\small $q=0$};
	\draw [black, line width = 1pt] (3,-0.2) -- (3,3.2) node [pos=0,below] {\small $H_{f}$} node [pos=1,xshift = 7pt, yshift=5pt] {\small $q=\infty$};
	\node (p1) at (3,0) [basept,label={[xshift = -7pt, yshift=-3pt] \small $p_{1}$}] {};
	\node (p8) at (3.5,0.5) [basept,label={[xshift = 10pt, yshift=-8pt] \small $p_{8}$}] {};
	\node (p2) at (3,3) [basept,label={[xshift = -7pt, yshift=-15pt] \small $p_{2}$}] {};
	\node (p3) at (3.5,2.5) [basept,label={[xshift = 10pt, yshift=-8pt] \small $p_{3}$}] {};
	\node (p4) at (2,3) [basept,label={[xshift = 0pt, yshift=-15pt] \small $p_{4}$}] {};
	\node (p5) at (0,2) [basept,label={[xshift = 10pt,yshift=-8pt] \small $p_{5}$}] {};
	\node (p6) at (0,0.8) [basept,label={[xshift = 10pt,yshift=-8pt] \small $p_{6}$}] {};
	\node (p7) at (1,0) [basept,label={[xshift = 0pt,yshift=-3pt] \small $p_{7}$}] {};
	\draw [red, line width = 0.8pt, ->] (p8) -- (p1);
	\draw [red, line width = 0.8pt, ->] (p3) -- (p2);
	\end{scope}
	\draw [->] (7,1.5)--(5,1.5) node[pos=0.5, below] {$\operatorname{Bl}_{p_{1}\cdots p_{8}}$};
	\begin{scope}[xshift=9.5cm,yshift=0cm]
	\draw[blue, line width = 1pt] (-1.2,3)--(2.7,3) node [pos=0,left] {\small $H_{g}- E_{2} - E_{4}$};
	\draw[blue, line width = 1pt] (-1.2,0) -- (2.7,0) node [pos=0,left] {\small $H_{g}-E_{1} - E_{7}$};	
	\draw[blue, line width = 1pt] (3,0.3)--(3,2.7) node [pos=0.5,right] {\small $ H_{f} - E_{1} - E_{2}$};
	\draw[blue, line width = 1pt] (-1,-0.2)--(-1,3.2) node [pos=1,above] {\small $H_{f} - E_{5}- E_{6}$};
	\draw[blue, line width = 1pt] (2.2,3.2)--(3.2,2.2) node [pos = 0, above] {\small $ E_{2}- E_{3}$};
	\draw[red,  line width = 1pt] (2.3,2.3)--(3,3) node [pos=0,below left] {\small $E_3$};
	\draw[blue, line width = 1pt] (2.2,-0.2)--(3.2,0.8); \node[blue] at (2.2,-0.5) {\small $ E_{1}- E_{8}$};
	\draw[red,  line width = 1pt] (3,0) -- (2.3,0.7) node [pos=1,above left] {\small $E_{8}$};
	\draw[red,  line width = 1pt] (0.7,2.6) -- (1.4,3.3) node [pos=0,below] {\small $E_{4}$};	
	\draw[red,  line width = 1pt] (-1.3,1.7)--(-0.6,2.4) node [pos=1,right] {\small $E_{5}$};
	\draw[red,  line width = 1pt] (-1.3,0.5)--(-0.6,1.2) node [pos=1,right] {\small $E_{6}$};
	\draw[red,  line width = 1pt] (-0.3,-0.3) -- (0.4,0.4) node [pos=1,above] {\small $E_{7}$};	
	\end{scope}
	\end{tikzpicture}
	\end{center}
	\caption{The model type (a) Sakai surface for the q-$\dPain{E_{3}^{(1)}/A_{5}^{(1)}}$ family} 
	\label{fig:KNY-a-pt-conf}
\end{figure}	

Recall that \emph{root variables} $a_{i}$ are computed using the \emph{period map} 
$\chi: \operatorname{Span}_{\mathbb{Z}}\{\alpha_{i}\}\to \mathbb{C}$, see  \cite{Sak:2001:RSAWARSGPE} 
or \cite{DzhTak:2018:SASGTDPE} for details. In this case the period map is defined using the symplectic form 
$\omega = \frac{df \wedge dg}{fg}$ and the root variables
 $a_{i} = \exp(\chi(\alpha_{i}))$ for this configuration can be expressed in terms of the above parameters as
\begin{equation}\label{eq:root-vars-KNY-a}
	a_{0} = \frac{\kappa_{1} \kappa_{2}}{\nu_{2}\nu_{3}\nu_{6}\nu_{7}},\ a_{1} = \frac{\kappa_{1}\kappa_{2}}{\nu_{1}\nu_{4}\nu_{6}\nu_{8}},\ 
	a_{2} = \frac{\nu_{6}}{\nu_{5}},\ a_{3} = \frac{\kappa_{1} \kappa_{2}^{2}}{\nu_{1}\nu_{2}\nu_{3}\nu_{5}\nu_{6}\nu_{8}},\ 
	a_{4} = \frac{\kappa_{1}}{\nu_{4}\nu_{7}}. 
\end{equation}
These root variables satisfy the constraint
\begin{equation}
	a_{0}a_{1}a_{2} = a_{3}a_{4} = \exp(\chi(\delta = - \mathcal{K}_{\mathcal{X}})) = 
	q = \frac{\kappa_{1}^{2} \kappa_{2}^{2}}{\nu_{1}\nu_{2}\nu_{3}\nu_{4}\nu_{5}\nu_{6}\nu_{7}\nu_{8}},
\end{equation}
where $q$ is the parameter of the dynamic. We can then express the coordinates of the base points using the root variables,
\begin{equation*}\label{eq:base-pt-root-a}
	\begin{aligned}
		&p_{1}\left(\frac{1}{f}=0,g=0\right)\leftarrow p_{8}\left(\frac{1}{f}=0,f g = -\frac{a_{1}a_{4}}{a_{0}}\frac{\nu_{4}^{2}}{\nu_{2}\nu_{3}}\right),\ 
		p_{2}\left(\frac{1}{f}=0,\frac{1}{g}=0\right)\leftarrow p_{3}\left(\frac{1}{f}=0,\frac{f}{g} = -\nu_{2} \nu_{3}\right),\\
		&p_{4}\left(f = \nu_{4},\frac{1}{g}=0\right),\ p_{5}\left(f=0,g=\frac{a_{4}}{a_{0}a_{2}}\frac{\nu_{4}}{\nu_{2}\nu_{3}}\right),\ 
		p_{6}\left(f=0,g=\frac{a_{4}}{a_{0}}\frac{\nu_{4}}{\nu_{2}\nu_{3}}\right),\ p_{7}\left(f=a_{4}\nu_{4},g=0\right),
	\end{aligned}
\end{equation*}
where the free parameters $\nu_{2}\nu_{3}$ and $\nu_{4}$ can be set to $1$ using the rescaling action on the coordinate axes. 


\subsubsection{The extended affine Weyl symmetry group} 
\label{ssub:birrep-a}

Recall that, given a Dynkin diagram, the corresponding Weyl group is defined in terms of generators $w_{i}$ corresponding to the 
nodes $\alpha_{i}$ of the diagram, and with edges of the diagram encoding the relations between these generators. For the affine $E_{3}^{(1)}$, we have
\begin{equation*}
	W\left(E_{3}^{(1)}\right) = W\left(\raisebox{-35pt}{\begin{tikzpicture}[
			elt/.style={circle,draw=black!100,thick, inner sep=0pt,minimum size=1.3ex},scale=0.8]
			\path (-0.6,-0.3) node (a1) [elt] {}
			( 0.6,-0.3) node (a2) [elt] {}
			( 0,0.67) node (a0) [elt] {}
			( -0.6,-1.2) node (a3) [elt] {}
			( 0.6,-1.2) node (a4) [elt] {};
            \draw [black] (a1) -- (a2) -- (a0) -- (a1);
			\draw [black,double distance=2pt] (a3) -- (a4);
            \node at ($(a1.west) + (-0.3,0.0)$) {$\alpha_{1}$};
            \node at ($(a2.east) + (0.3,0.0)$) {$\alpha_{2}$};
            \node at ($(a0.north) + (0,0.3)$) {$\alpha_{0}$};
            \node at ($(a3.south) + (0,-0.3)$) {$\alpha_{3}$};
            \node at ($(a4.south) + (0,-0.3)$) {$\alpha_{4}$};			
	\end{tikzpicture}} \right)
	=
	\left\langle w_{0},\dots, w_{4}\ \left|\ 
	\begin{alignedat}{2}
    w_{i}^{2} = e,\quad  w_{i}\circ w_{j} &= w_{j}\circ w_{i}& &\text{ when 
   				\raisebox{-10pt}{\begin{tikzpicture}[
   							elt/.style={circle,draw=black!100,thick, inner sep=0pt,minimum size=1.5mm}]
   						\path   ( 0,0) 	node  	(ai) [elt] {}
   						        ( 0.5,0) 	node  	(aj) [elt] {};
   						\draw [black] (ai)  (aj);
   							\node at ($(ai.south) + (0,-0.2)$) 	{$\alpha_{i}$};
   							\node at ($(aj.south) + (0,-0.2)$)  {$\alpha_{j}$};
   							\end{tikzpicture}}}\\
w_{i}\circ w_{j} \circ w_{i}&= w_{j}\circ w_{i} \circ w_{j}& &\text{ when 
   				\raisebox{-10pt}{\begin{tikzpicture}[
   							elt/.style={circle,draw=black!100,thick, inner sep=0pt,minimum size=1.5mm}]
   						\path   ( 0,0) 	node  	(ai) [elt] {}
   						        ( 0.5,0) 	node  	(aj) [elt] {};
   						\draw [black] (ai)--(aj);
   							\node at ($(ai.south) + (0,-0.2)$) 	{$\alpha_{i}$};
   							\node at ($(aj.south) + (0,-0.2)$)  {$\alpha_{j}$};
   							\end{tikzpicture}}}							
	\end{alignedat}\right.\right\rangle. 
\end{equation*} 
In our setting this group is represented via actions on $\operatorname{Pic}(\mathcal{X})$ given by reflections in the 
roots $\alpha_{i}$, 
\begin{equation}\label{eq:root-refl}
	w_{i}(\mathcal{C}) = w_{\alpha_{i}}(\mathcal{C}) = \mathcal{C} - 2 
	\frac{\mathcal{C}\bullet \alpha_{i}}{\alpha_{i}\bullet \alpha_{i}}\alpha_{i}
	= \mathcal{C} + \left(\mathcal{C}\bullet \alpha_{i}\right) \alpha_{i},\qquad \mathcal{C}\in \operatorname{Pic(\mathcal{X})}.
\end{equation}
Next, we need to extend this group by the group of the automorphisms of the Dynkin diagram (that corresponds to some re-labeling of the 
symmetry/surface roots) $\operatorname{Aut}(E_3^{(1)})\simeq \operatorname{Aut}(A_{5}^{(1)})\simeq \mathbb{D}_{6}$, where
$\mathbb{D}_{6}$ is the usual dihedral group of the symmetries of a regular hexagon (note that this group acts
on both the symmetry and the surface roots and we describe its action using the standard permutation cycle notation). 
This group is generated by the two 
reflections $\sigma_{0}=(\alpha_{0}\alpha_{2})(\alpha_{3}\alpha_{4})=(\delta_{0}\delta_{1})(\delta_{2}\delta_{5})(\delta_{3}\delta_{4})$
and $\sigma_{1}=(\alpha_{1}\alpha_{2})=(\delta_{0}\delta_{2})(\delta_{3}\delta_{5})$, and it is convenient to also 
consider the rotation $\sigma_{2} = \sigma_{1}\circ \sigma_{0} = (\alpha_{0}\alpha_{1}\alpha_{2})(\alpha_{3}\alpha_{4}) 
= \delta_{0} \delta_{1} \delta_{2} \delta_{3} \delta_{4} \delta_{5}$, see Figure~\ref{fig:automs}.

The action of generators of $\operatorname{Aut}(E_{3}^{(1)})$ on $\operatorname{Pic}(\mathcal{X})$ can be 
realized as compositions of reflections in some other roots in the lattice, e.g.,
\begin{align*}
	\sigma_{0} &= w_{\mathcal{E}_{5} - \mathcal{E}_{7}} \circ 	w_{\mathcal{E}_{4} - \mathcal{E}_{8}} \circ 	
		w_{\mathcal{H}_{g} - \mathcal{E}_{1} - \mathcal{E}_{2}} \circ 
		w_{\mathcal{H}_{g} - \mathcal{E}_{3} - \mathcal{E}_{6}} \circ 
		w_{\mathcal{H}_{f} - \mathcal{E}_{2} - \mathcal{E}_{6}},	\\
	\sigma_{1} &= w_{\mathcal{E}_{5} - \mathcal{E}_{8}} \circ 	
		w_{\mathcal{H}_{g} - \mathcal{E}_{4} - \mathcal{E}_{6}} \circ 
		w_{\mathcal{H}_{f} - \mathcal{E}_{1} - \mathcal{E}_{6}}.	
\end{align*}
The resulting group $\widetilde{W}(E_{3}^{(1)})=\operatorname{Aut}(E_{3}^{(1)})\ltimes W(E_{3}^{(1)})$ is called an  
extended affine Weyl symmetry group and its action on $\operatorname{Pic}(\mathcal{X})$ can 
be further extended to an action on point configurations by elementary birational maps (which lifts to 
isomorphisms $w_{i}: \mathcal{X}_{\mathbf{b}}\to \mathcal{X}_{\overline{\mathbf{b}}}$ on the family of Sakai's surfaces);
this is known as a \emph{birational representation}. We describe it in the following Lemma.

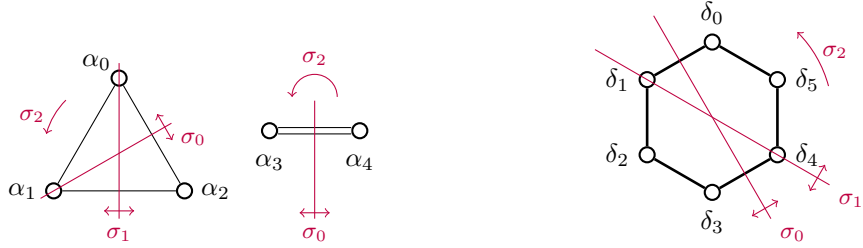
\begin{figure}[ht]
\begin{equation*}
	\raisebox{-45pt}{\begin{tikzpicture}[
			elt/.style={circle,draw=black!100,thick, inner sep=0pt,minimum size=2mm}]
			\path (-0.87,-0.5) node (a1) [elt] {}
			( 0.87,-0.5) node (a2) [elt] {}
			( 0,1) node (a0) [elt] {}
			( 2,0.3) node (a3) [elt] {}
			( 3.2,0.3) node (a4) [elt] {};
            \draw [black] (a1) -- (a2) -- (a0) -- (a1);
			\draw [black,double distance=2pt] (a3) -- (a4);
            \node at ($(a1.west) + (-0.3,0.0)$) {$\alpha_{1}$};
            \node at ($(a2.east) + (0.3,0.0)$) {$\alpha_{2}$};
            \node at ($(a0.north) + (-0.3,0.1)$) {$\alpha_{0}$};
            \node at ($(a3.south) + (0,-0.3)$) {$\alpha_{3}$};
            \node at ($(a4.south) + (0,-0.3)$) {$\alpha_{4}$};		
			\draw [purple, postaction={decorate,decoration={markings, mark=at position 0.95 with {\draw [<->,purple,thin] (0,-2mm) -- (0,2mm);}}}]
			  ($1.2*(a1)$) --  ($0.8*(0.87,0.5)$) node [pos=1,below right] {\small $\sigma_{0}$};
			\draw [purple, postaction={decorate,decoration={markings, mark=at position 0.95 with {\draw [<->,purple,thin] (0,-2mm) -- (0,2mm);}}}]
			  (0,1.2) --  (0,-0.87) node [pos=1,below] {\small $\sigma_{1}$};		
  		  \draw[-latex,purple,->](135:1) arc(135:165:1) node[left,pos=0.5] {\small $\sigma_{2}$};			  		
			\draw [purple, postaction={decorate,decoration={markings, mark=at position 0.95 with {\draw [<->,purple,thin] (0,-2mm) -- (0,2mm);}}}]
			  (2.6,0.7) --  (2.6,-0.87) node [pos=1,below] {\small $\sigma_{0}$};		
  		  \draw[-latex,purple,->] (2.9,0.75) arc(0:180:0.3) node[above,pos=0.5] {\small $\sigma_{2}$};
	\end{tikzpicture}}
	\qquad	\qquad 	\qquad \qquad
	\raisebox{-45pt}{\begin{tikzpicture}[
			elt/.style={circle,draw=black!100,thick, inner sep=0pt,minimum size=2mm}]
		\path 				(0,1)	node 	(d0) 	[elt, label=above:{$\delta_{0}$} ] {}
		        (-{sqrt(3)/2},1/2)	node 	(d1) 	[elt, label=left:{$\delta_{1}$} ] {}
		        (-{sqrt(3)/2},-1/2) node  	(d2)	[elt, label=left:{$\delta_{2}$} ] {}
		     	   			(0,-1)	node  	(d3) 	[elt, label=below:{$\delta_{3}$} ] {}
		        ({sqrt(3)/2},-1/2) 	node  	(d4) 	[elt, label=right:{$\delta_{4}$} ] {}
		        ({sqrt(3)/2},1/2) 	node 	(d5) 	[elt, label=right:{$\delta_{5}$} ] {};
		\draw [black,line width=1pt ] (d0) -- (d1) -- (d2) -- (d3) -- (d4) -- (d5)--(d0);
		\draw [purple, postaction={decorate,decoration={markings, mark=at position 0.95 with {\draw [<->,purple,thin] (0,-2mm) -- (0,2mm);}}}]
		  ($1.8*(d1)$) --  ($1.8*(d4)$) node [pos=1,below right] {\small $\sigma_{1}$};
		\draw [purple, postaction={decorate,decoration={markings, mark=at position 0.95 with {\draw [<->,purple,thin] (0,-2mm) -- (0,2mm);}}}]
		  ($1.8*(-{sqrt(3)/4},3/4)$) --  ($1.8*({sqrt(3)/4},-3/4)$) node [pos=1,below right] {\small $\sigma_{0}$};
		  \draw[-latex,purple,->](15:1.6) arc(15:45:1.6) node[right,pos=0.6] {\small $\sigma_{2}$};
	\end{tikzpicture}}
\end{equation*}
	\caption{The automorphisms of Dynkin diagrams $E_{3}^{(1)}$ and $A_{5}^{(1)}$}
	\label{fig:automs}
\end{figure}


\begin{lemma}\label{lem:bir-repE3-a}
	Generators of the extended affine Weyl group $\widetilde{W}(E_{3}^{(1)})$ transform an initial point configuration
	in the (a)-model,
	expressed in both the KNY parameters and the root variables as
	\begin{equation}\label{eq:base-pt-a}
    \left(\begin{matrix}
    \kappa_{1} & \nu_{1} & \nu_{3} & \nu_{5} & \nu_{7}\\
    \kappa_{2} & \nu_{2} & \nu_{4} & \nu_{6} & \nu_{8}
    \end{matrix}\ ;\ \begin{matrix} f\\ g \end{matrix}\right) \sim
	\left(\begin{matrix}
    a_{0} & a_{1} & a_{2} \\
    a_{3} & a_{4}
    \end{matrix}\ ;\ \begin{matrix} f\\ g \end{matrix}\right), 
	\end{equation}	
	by changing the parameters and the affine coordinates $(f,g)$ as follows:
	\begin{align*}
	w_0&:
		\left(\begin{matrix}
		\kappa_{1} & \nu_{1}\left(\frac{\nu_{2}\nu_{3}\nu_{6}\nu_{7}}{\kappa_{1}\kappa_{2}}\right)^{3} & 
			\nu_{3} & \nu_{5} & \nu_{7}\\
		\frac{\nu_{2}\nu_{3}\nu_{6}\nu_{7}}{\kappa_{1}} & \nu_{2} & \nu_{4} & \frac{\kappa_{1}\kappa_{2}}{\nu_{2}\nu_{3}\nu_{7}} & \nu_{8}
		\end{matrix}\ ;\ \begin{matrix} 
		\frac{\kappa_{1}\kappa_{2}f(\nu_{7}(f + \nu_{2}\nu_{3}g) - \kappa_{1})}{\nu_{2}\nu_{3}\nu_{7}(\nu_{6}(\nu_{7}f - \kappa_{1})+\kappa_{1}\kappa_{2}g)} \\[5pt]	
		\frac{\kappa_{1}^{2}\kappa_{2}g(\kappa_{2}(f + \nu_{2}\nu_{3}g)- \nu_{2}\nu_{3}\nu_{6})}{(\nu_{2}\nu_{3})^{2}\nu_{6}\nu_{7}(\nu_{6}(\nu_{7}f - 		\kappa_{1})+\kappa_{1}\kappa_{2}g)}\end{matrix} \right) \\
	&\qquad \sim 	
		\left(\begin{matrix}
		\frac{1}{a_{0}} & a_{0}a_{1} & a_{0}a_{2} \\
		a_{3} & a_{4}
		\end{matrix}\ ;\ \begin{matrix} \frac{a_{0} f(f + g - a_{4})}{f + a_{0}g - a_{4}}\\[5pt] 
		\frac{a_{0}g(a_{0}(f + g) - a_{4})}{(f + a_{0}g - a_{4})} \end{matrix}\right), \\[7pt]
	w_{1}&:
	    \left(\begin{matrix}
	    \kappa_{1} & \nu_{1}\left(\frac{\kappa_{1}\kappa_{2}}{\nu_{1}\nu_{4}\nu_{6}\nu_{8}}\right)^{3} & 
		\nu_{3} & \nu_{5} & \nu_{7}\\
	    \kappa_{2} \left(\frac{\kappa_{1}\kappa_{2}}{\nu_{1}\nu_{4}\nu_{6}\nu_{8}}\right)^{2} & \nu_{2} & \nu_{4} & 	
			\frac{\kappa_{1}\kappa_{2}}{\nu_{1}\nu_{4}\nu_{6}} & \nu_{8}
	    \end{matrix}\ ;\ \begin{matrix} \frac{\nu_{4}\nu_{6}f(\nu_{1}\nu_{8}(f - \nu_{4})g+ \kappa_{1})}{\kappa_{1}(\kappa_{2}(f - \nu_{4})g+ \nu_{4}\nu_{6})}\\[5pt] 
		\frac{(\nu_{1}\nu_{6}\nu_{8})^{2}\nu_{4}g(\kappa_{2}(f - \nu_{4})g+ \nu_{4}\nu_{6})}{\kappa_{1}\kappa_{2}^{2}((\nu_{1}\nu_{6}\nu_{8}f - \kappa_{1}\kappa_{2})g+ \kappa_{1}\nu_{6})}
		 \end{matrix}\right) \\
	&\qquad \sim 			
		\left(\begin{matrix}
	    a_{0}a_{1} & \frac{1}{a_{1}} & a_{1}a_{2} \\
	    a_{3} & a_{4}
	    \end{matrix}\ ;\ \begin{matrix} \frac{f \left(a_0 (f-1) g+a_1 a_4\right)}{a_1 \left(a_0 (f-1) g+a_4\right)}\\[5pt] 
		\frac{g \left(a_0 (f-1) g+a_4\right)}{a_1 \left(a_0 g \left(f-a_1\right)+a_1 a_4\right)}\end{matrix}\right), 	\\[7pt]
	w_{2}&:
	    \left(\begin{matrix}
	    \kappa_{1} & \nu_{1} & \nu_{3} & \nu_{5} & \nu_{7}\\
	    \frac{\kappa_{2} \nu_{5}}{\nu_{6}} & \nu_{2} & \nu_{4} &\frac{\nu_{5}^{2}}{\nu_{6}} & \nu_{8}
	    \end{matrix}\ ;\ \begin{matrix} f\\[5pt] g \end{matrix}\right) \sim
		\left(\begin{matrix}
	    a_{0}a_{2} & a_{1}a_{2} & \frac{1}{a_{2}} \\
	    a_{3} & a_{4}
	    \end{matrix}\ ;\ \begin{matrix} f\\[5pt] g \end{matrix}\right), 	\\[7pt]	
	w_{3}&:
	    \left(\begin{matrix}
	    \frac{\kappa_{1}^{3}\kappa_{2}^{4}}{(\nu_{1}\nu_{2}\nu_{3}\nu_{5}\nu_{6}\nu_{8})^{2}} & 
		\nu_{1}  & \nu_{3} & 	\nu_{5}  &  \nu_{7}\\
	    \frac{(\nu_{1}\nu_{2}\nu_{3}\nu_{5}\nu_{6}\nu_{8})^{2}}{\kappa_{1}^{2}\kappa_{2}^{3}} & \nu_{2} & \nu_{4} & \nu_{6}   & \nu_{8}
	    \end{matrix}\ ; \right. \\
		&\qquad \left.\ \begin{matrix} 
		\frac{\kappa _1 \kappa _2^2 f \left(g \left(\nu _1 \nu _5 \nu _8 \left(\kappa _2 \left(f+g \nu _2 \nu _3\right)-\nu _2 \nu _3 \nu _5\right)
		-\kappa _1 \kappa _2^2\right)+\kappa _1 \kappa _2 \nu _5\right) \left(g \left(\nu _1 \nu _6 \nu _8 \left(\kappa _2 \left(f+g \nu _2 \nu _3\right)
		-\nu _2 \nu _3 \nu _6\right)-\kappa _1 \kappa _2^2\right)+\kappa _1 \kappa _2 \nu _6\right)}
		{\left(\nu _1 \nu _8\right)^{2}\left(\nu _2 \nu _3 \nu _5 \nu _6\right) 
		\left(f g \kappa _2^2+\nu _2 \nu _3 \left(g \kappa _2-\nu _5\right) \left(g \kappa _2-\nu _6\right)\right) \left(f g \nu _1 \nu _5 \nu _6 \nu _8+\kappa _1 
		\left(g \kappa _2-\nu _5\right) \left(g \kappa _2-\nu _6\right)\right)}
		\\[5pt] 
		\frac{g \left(\kappa _1 \kappa _2\right){}^2 \left(f g \kappa _2^2+\nu _2 \nu _3 \left(g \kappa _2-\nu _5\right) \left(g \kappa _2-\nu _6\right)\right)}{\left(\nu _2 
		\nu _3\right){}^2 \left(\nu _1 \nu _5 \nu _6 \nu _8\right) \left(f g \nu _1 \nu _5 \nu _6 \nu _8+\kappa _1 \left(g \kappa _2-\nu _5\right) 
		\left(g \kappa _2-\nu _6\right)\right)}
		\end{matrix}\right) \\ 
		&\qquad \sim 
		\left(\begin{matrix}
	    a_{0} & a_{1} & a_{2} \\
	    \frac{1}{a_{3}} & a_{3}^{2}a_{4}
	    \end{matrix}\ ;\ \begin{matrix} 
		\frac{a_0^2 a_1 a_2 f \left(a_0 g \left(-a_1+f+g\right)-a_4 \left(g-a_1\right)\right) 
		\left(a_0 a_2 g \left(-a_1 a_2+f+g\right)-a_4 \left(g-a_1 a_2\right)\right)}{\left(a_0 g \left(a_0 a_2 (f+g)-\left(a_2+1\right) a_4\right)+a_4^2\right) 
		\left(a_0 g \left(a_4 \left(f-a_1 \left(a_2+1\right)\right)+a_0 a_1 a_2 g\right)+a_1 a_4^2\right)}\\[5pt] 
		\frac{a_0 a_1^2 a_2 g \left(a_0 g \left(a_0 a_2 (f+g)-\left(a_2+1\right) a_4\right)+a_4^2\right)}{
			a_4 \left(a_0 g \left(a_4 \left(f-a_1 \left(a_2+1\right)\right)+a_0 a_1 a_2 g\right)+a_1 a_4^2\right)}  \end{matrix}\right), 	\\[7pt]	
	w_{4}&:
	    \left(\begin{matrix}
	    \frac{(\nu_{4}\nu_{7})^{2}}{\kappa_{1}} & \nu_{1}  & \nu_{3} & \nu_{5}  & \nu_{7} \\
	    \frac{\kappa_{1}^{2}\kappa_{2}}{(\nu_{4}\nu_{7})^{2}} & \nu_{2} & \nu_{4} & \nu_{6} & \nu_{8}
	    \end{matrix}\ ;\ \begin{matrix} \frac{\nu_{4} \nu_{7} f}{\kappa_{1}}\\[5pt] \frac{\nu_{4}\nu_{7}^{2}(f - \nu_{4})g}{\kappa_{1}(\nu_{7} f - \kappa_{1})} \end{matrix}\right) \sim
		\left(\begin{matrix}
	    a_{0} & a_{1} & a_{2} \\
	    a_{3}a_{4}^{2} &\frac{1}{a_{4}} 
	    \end{matrix}\ ;\ \begin{matrix} \frac{f}{a_{4}}\\[5pt] \frac{(f-1)g}{a_{4}(f-a_{4})} \end{matrix}\right),  \\[7pt]
	\sigma_{0}&:
	    \left(\begin{matrix}
	   \frac{\nu_{1}\nu_{2}\nu_{3}\nu_{4}\nu_{5}\nu_{6}\nu_{7}\nu_{8}}{\kappa_{1}\kappa_{2}^{2}} & 
	   		\frac{\kappa_{1}\kappa_{2}\nu_{2}\nu_{3}\nu_{5}\nu_{7}}{\nu_{1}(\nu_{4}\nu_{6}\nu_{8})^{2}} & \nu_{3} & \nu_{5} & \nu_{7}\\
	    \frac{\kappa_{2}\nu_{2}\nu_{3}\nu_{5}\nu_{7}}{\nu_{1}\nu_{4}\nu_{6}\nu_{8}} & \nu_{2} & \nu_{4} & \frac{\nu_{2}\nu_{3}\nu_{5}\nu_{6}\nu_{7}}{\kappa_{1}\kappa_{2}} & \nu_{8}
	    \end{matrix}\ ;\ \begin{matrix} \frac{g \nu _1 \nu _4 \nu _6 \nu _8 \left(\kappa _2 \left(f+g \nu _2 \nu _3\right)-
			\nu _2 \nu _3 \nu _6\right)}{\kappa _1 \kappa _2 \left(g \kappa _2-\nu _6\right)}\\[5pt] 
			-\frac{f \nu _1 \nu _4 \nu _6^2 \nu _8}{\kappa _1 \kappa _2 \nu _2 \nu _3 \left(g \kappa _2-\nu _6\right)} \end{matrix}\right)\\
		&\qquad \sim
		\left(\begin{matrix}
	    \frac{1}{a_{2}} & \frac{1}{a_{1}} & \frac{1}{a_{0}} \\
	    \frac{1}{a_{4}} & \frac{1}{a_{3}}
	    \end{matrix}\ ;\ \begin{matrix} \frac{g \left(a_0 (f+g)-a_4\right)}{a_1 \left(a_0 g-a_4\right)}\\[5pt] 
		-\frac{a_4 f}{a_1 \left(a_0 g-a_4\right)} \end{matrix}\right), 	\\[7pt]	
	\sigma_{1}&:
		    \left(\begin{matrix}
		    \frac{(\nu_{4}\nu_{7})^{2}}{\kappa_{1}} & \frac{(\nu_{2}\nu_{3}\nu_{6}\nu_{7})^{2}}{\kappa_{1}\kappa_{2}\nu_{4}\nu_{5}\nu_{8}} &
			\nu_{3} & \nu_{5} & \nu_{7}\\
			   \frac{\nu_{1}\nu_{5}\nu_{8}(\nu_{2}\nu_{3}\nu_{6})^{2}}{\kappa_{1}\kappa_{2}^{2}\nu_{4}} & \nu_{2} & \nu_{4} &
			   \frac{\nu_{1}\nu_{4}\nu_{5}\nu_{6}\nu_{8}}{\kappa_{1}\kappa_{2}} & \nu_{8}
		    \end{matrix}\ ;\ \begin{matrix} \frac{\nu _4 \left(g \kappa _2 \left(f-\nu _4\right)+\nu _4 \nu _6\right)}{f \nu _6}\\[5pt]
			\frac{f g \kappa _2^2 \nu _4}{\left(\nu _2 \nu _3\right){}^2 \nu _6 \left(g \kappa _2-\nu _6\right)} \end{matrix}\right)\\
		& \qquad \sim
		\left(\begin{matrix}
	    \frac{1}{a_{0}} & \frac{1}{a_{2}} & \frac{1}{a_{1}} \\
	    \frac{1}{a_{3}} & \frac{1}{a_{4}}
	    \end{matrix}\ ;\ \begin{matrix} \frac{a_0 (f-1) g+a_4}{a_4 f}\\[5pt] \frac{a_0^2 f g}{a_4 \left(a_0 g-a_4\right)}\end{matrix}\right), 	\\[7pt]				
	\sigma_{2}&:
	    \left(\begin{matrix}
	    \frac{\kappa_{1}\kappa_{2}^{2}\nu_{4}\nu_{7}}{\nu_{1}\nu_{2}\nu_{3}\nu_{6}\nu_{8}} & \frac{(\nu_{2}\nu_{3}\nu_{6}\nu_{7})^{2}}{\kappa_{1}\kappa_{2}\nu_{4}\nu_{5}\nu_{8}} 
			& \nu_{3} & \nu_{5} & \nu_{7}\\
	    \frac{(\nu_{2}\nu_{3})^{2}\nu_{5}\nu_{6}}{\kappa_{2}\nu_{4}^{2}} & \nu_{2} & \nu_{4} & \frac{\kappa_{1}\kappa_{2}\nu_{5}}{\nu_{1}\nu_{4}\nu_{6}\nu_{8}} & \nu_{8}
	    \end{matrix}\ ;\ \begin{matrix} -\frac{\kappa _2 \nu _4 \left(g \left(f \nu _1 \nu _6 \nu _8-\kappa _1 \kappa _2\right)+\kappa _1 
		\nu _6\right)}{g \nu _1 \nu _2 \nu _3 \nu _6 \nu _8 \left(g \kappa _2-\nu _6\right)}\\[5pt] 
			\frac{f g \kappa _2^2 \nu _4}{\left(\nu _2 \nu _3\right){}^2 \nu _6 \left(g \kappa _2-\nu _6\right)}\end{matrix}\right)\\
		&\qquad  \sim
		\left(\begin{matrix}
	    a_{2} & a_{0} & a_{1} \\
	    a_{4} & a_{3}
	    \end{matrix}\ ;\ \begin{matrix} -\frac{a_0 \left(f-a_1\right) g +a_1 a_4}{g \left(a_0 g-a_4\right)}\\[5pt] 
		\frac{a_0^2 f g}{a_4 \left(a_0 g-a_4\right)}\end{matrix}\right). 		\end{align*}
\end{lemma}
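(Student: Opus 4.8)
The plan is to establish the birational representation generator by generator, checking in each case the three mutually consistent pieces of data attached to a map $w:\mathcal{X}_{\mathbf{b}}\to\mathcal{X}_{\overline{\mathbf{b}}}$: the induced linear map on $\operatorname{Pic}(\mathcal{X})$, the transformation of the root variables $a_0,\dots,a_4$, and the transformation of the affine coordinates $(f,g)$ together with the base point configuration. The logical backbone is that the first datum is forced by the definition and the other two must be compatible with it. For the reflections $w_i$ the action on $\operatorname{Pic}(\mathcal{X})$ is read off directly from \eqref{eq:root-refl} applied to the basis $\{\mathcal{H}_f,\mathcal{H}_g,\mathcal{E}_1,\dots,\mathcal{E}_8\}$, using the expressions for $\alpha_i$ in Figure~\ref{fig:roots-bases-a}; for the diagram automorphisms $\sigma_j$ one uses the decompositions into reflections recorded just above the statement. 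Since the period map $\chi$ is linear, the $\operatorname{Pic}(\mathcal{X})$-action immediately yields the transformation of the root variables, and I would first verify that these match the stated multiplicative formulas, e.g.\ $w_0:(a_0,a_1,a_2)\mapsto(a_0^{-1},a_0 a_1,a_0 a_2)$ with $a_3,a_4$ fixed, which is exactly the Cartan-matrix reflection on the $(A_2+A_1)^{(1)}$ root data and is a purely combinatorial check.

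The substantive content is the verification that the explicit coordinate maps realize these lattice actions, and I would organize the generators by the type of birational map they induce. The simplest is $w_2=w_{\mathcal{E}_6-\mathcal{E}_5}$, which merely interchanges the blow-up points $p_5$ and $p_6$ and hence acts trivially on $(f,g)$ while permuting parameters. The generator $w_4=w_{\mathcal{H}_f-\mathcal{E}_4-\mathcal{E}_7}$ satisfies $w_4(\mathcal{H}_f)=\mathcal{H}_f$, so it preserves the $\mathcal{H}_f$-ruling — indeed $\bar f$ depends on $f$ alone — and acts on the fibre coordinate $g$ by a M\"obius transformation with $f$-dependent coefficients, which is low degree and directly checkable. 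The genuinely nontrivial cases are the quadratic Cremona transformations $w_0,w_1$ (reflections in $\alpha_0=\mathcal{H}_f+\mathcal{H}_g-\mathcal{E}_{2367}$ and $\alpha_1=\mathcal{H}_f+\mathcal{H}_g-\mathcal{E}_{1468}$) and the higher-bidegree transformation $w_3$ (reflection in $\alpha_3=\mathcal{H}_f+2\mathcal{H}_g-\mathcal{E}_{123568}$). For these the action on $\operatorname{Pic}(\mathcal{X})$ tells me precisely which curves are contracted and which base points are created, and I would build each map as the corresponding composition of blow-ups at the relevant base points followed by blow-downs of the strict transforms of the rulings through them. This determines the map up to the residual rescaling gauge, which I then fix by matching the required coordinates of the transformed base points. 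The automorphisms $\sigma_0,\sigma_1,\sigma_2$ follow either from their reflection decompositions or by the same direct procedure.

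For every generator the decisive step is to confirm that the coordinate map sends the base point configuration of $\mathcal{X}_{\mathbf{b}}$ to that of $\mathcal{X}_{\overline{\mathbf{b}}}$, with $\overline{\mathbf{b}}$ the stated parameter transformation. For the ordinary base points this is a substitution into the explicit formulas. The delicate part is the pair of infinitely near points $p_1\leftarrow p_8$ and $p_2\leftarrow p_3$ (and their counterparts in the image): here I must expand the map to first order along the relevant coordinate line and verify that it carries the curvilinear coordinate defining the infinitely near point to the correct value in the transformed configuration. As a final consistency check I would confirm that each map pulls the symplectic form $\omega=\tfrac{df\wedge dg}{fg}$ back to itself up to the scalar dictated by the period map; this guarantees that the map is an isomorphism of the Sakai surfaces rather than a mere birational coincidence and re-confirms the root-variable transformation. (One could alternatively deduce correctness by matching these formulas with those of \cite{KajNouYam:2017:GAPE} up to the rescaling gauge, but the self-contained route above is what I would carry out.)

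I expect the main obstacle to be precisely the bookkeeping for the Cremona generators $w_0,w_1,w_3$: verifying that the high-degree rational formulas contract exactly the prescribed curves, resolve to genuine isomorphisms after the eight blow-ups, and transport the two infinitely near base points correctly. The geometric data — which divisor classes become exceptional under each $w_i$ — organizes and tightly constrains this computation, but the algebra itself is heavy and is most safely executed with computer algebra; the role of the lattice picture is to predict the answer in advance and reduce the verification to a finite, well-structured set of coordinate checks.
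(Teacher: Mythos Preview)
Your proposal is correct and follows essentially the same standard approach the paper invokes (it refers to \cite{DzhFilSto:2020:RCDOPWHWDPE,DzhTak:2018:SASGTDPE} for the detailed computations). The one point the paper makes explicit that your outline leaves implicit is the mechanism for recovering the evolution of the redundant KNY parameters $\kappa_i,\nu_j$ from the root-variable evolution: one fixes a parametrization such as \eqref{eq:KNY2root-a}, reads off $\overline{a}_i$ from the reflection, and then back-substitutes via \eqref{eq:root-vars-KNY-a} to obtain $\overline{\kappa}_i,\overline{\nu}_j$; you would want to incorporate this step when matching the parameter columns in the stated formulas.
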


The proof is standard, see \cite{DzhFilSto:2020:RCDOPWHWDPE,DzhTak:2018:SASGTDPE} for similar computations explained in detail. Note that reflection automorphisms $\sigma_{0}$ and $\sigma_{1}$ change the sign of the symplectic form, and so $a_{i}$ changes to $a_{i}^{-1}$. 
We also remark that to compute the evolution of the parameters $\kappa_{i}$ and $\gamma_{j}$, we use 
the evolution of the root variables $a_{i}$. Recall that there
is redundancy in the full set of parameters $\kappa_{i}$ and $\gamma_{j}$ in \cite{KajNouYam:2017:GAPE} for this point configuration. We first express
some of them in terms of the others and the root variables $a_{i}$ using \eqref{eq:root-vars-KNY-a} --- 
there are various choices but they are equivalent up to changing the overall 
normalization. We put
\begin{equation}\label{eq:KNY2root-a}
	\kappa_{1} = a_{4}\cdot \nu_{4}\nu_{7},\quad \kappa_{2} = \frac{a_{0}a_{2}}{a_{4}}\cdot \frac{\nu_{2}\nu_{3}\nu_{5}}{\nu_{4}},\quad 
	\nu_{1} = \frac{a_{0}}{a_{1}}\cdot \frac{\nu_{2}\nu_{3}\nu_{7}}{\nu_{4}\nu_{8}},\quad \nu_{6} = a_{2}\cdot \nu_{5}.
\end{equation}
Then, for $w_{0}$, the root variables evolve as $\overline{a}_{0} = 1/a_{0}$, $\overline{a}_{1} = a_{0} a_{1}$, $\overline{a}_{2} = a_{0} a_{2}$,
$\overline{a}_{3} = a_{3}$, $\overline{a}_{4} = a_{4}$, and so, for example, 
\begin{equation*}
	\overline{\kappa}_{2} =  \frac{\overline{a}_{0}\overline{a}_{2}}{\overline{a}_{4}} \cdot \frac{\nu_{2}\nu_{3}\nu_{5}}{\nu_{4}}
	= \frac{a_{2}}{a_{4}}\cdot \frac{\nu_{2}\nu_{3}\nu_{5}}{\nu_{4}} = 
	\frac{\kappa_{2}}{a_{0}}  = \frac{\nu_{2}\nu_{3}\nu_{6}\nu_{7}}{\kappa_{1}},
\end{equation*}
where we used \eqref{eq:root-vars-KNY-a} again at the last step. Other computations are similar.


\subsubsection{The discrete Painlev\'e equation $[000\overline{1}1]$ on the q-$\dPain{E_{3}^{(1)}/A_{5}^{(1)};a}$ Surface Family} 
\label{ssub:dP-a}

In \cite{KajNouYam:2017:GAPE} the evolution for all multiplicative cases is induced by the evolution of parameters
\begin{equation}\label{eq:KNY-par-evol}
	\overline{\kappa}_{1} = \frac{\kappa_{1}}{q},\quad \overline{\kappa}_{2} = q \kappa_{2},\qquad\text{where }
	q= \frac{\kappa_{1}^{2}\kappa_{2}^{2}}{\nu_{1}\nu_{2}\nu_{3}\nu_{4}\nu_{5}\nu_{6}\nu_{7}\nu_{8}}.
\end{equation}
This induces the following evolution on the root variables \eqref{eq:root-vars-KNY-a}:
\begin{equation*}
	\overline{a}_{0} = a_{0},\quad \overline{a}_{1} = a_{1}, \quad \overline{a}_{2} = a_{2},\quad \overline{a}_{3} = q a_{3},\quad \overline{a}_{4} = \frac{a_{4}}{q},
\end{equation*}
that in turn induces the translation \eqref{eq:dP-A5-a-trans} on the symmetry roots,
\begin{equation*}
\psi_*:\upalpha=\langle\alpha_0,\alpha_1,\alpha_2;\alpha_3,\alpha_{4}\rangle\mapsto
\psi_*(\upalpha)=\upalpha+ \langle 0,0,0;-1,1 \rangle\delta.
\end{equation*}
Using the standard techniques, again see \cite{DzhFilSto:2020:RCDOPWHWDPE,DzhTak:2018:SASGTDPE}, this translation element can be expressed in terms of the generators
of $\widetilde{W}(E_{3}^{(1)})$ as $\psi = \sigma_{2}^{3}\circ w_{3} = w_{4} \circ \sigma_{2}^{3}$, resulting in the mapping 
\begin{equation*}
	\psi(f,g) = \left( - \frac{\nu_{2} \nu_{3} \nu_{4}(\kappa_{2} g - \nu_{5})(\kappa_{2}g - \nu_{6})}{\kappa_{2}^{2} f g},
	\frac{\nu_{2}\nu_{3}\nu_{4}\nu_{5}\nu_{6}\nu_{7}(\nu_{1}\nu_{5}\nu_{6}\nu_{8}fg + \kappa_{1}(\kappa_{2}g - \nu_{5})(\kappa_{2}g - \nu_{6}))}{
	(\kappa_{1}\kappa_{2})^{2}g(\kappa_{2}^{2}(f + \nu_{2}\nu_{3}g)g - \nu_{2}\nu_{3}(\kappa_{2}(\nu_{5}+\nu_{6})g - \nu_{5}\nu_{6}))}\right),
\end{equation*}
which is equivalent to \cite[(8.11)]{KajNouYam:2017:GAPE},
\begin{equation*}
	f \overline{f} = - \nu_{2} \nu_{3} \nu_{4}\frac{\left(g - \frac{\nu_{5}}{\kappa_{2}}\right)\left(g - \frac{\nu_{6}}{\kappa_{2}}\right)}{g},\quad 
	g \underline{g} = \frac{\kappa_{1}}{\nu_{1}\nu_{2}\nu_{3}\nu_{8}}\, \frac{f - \frac{\kappa_{1}}{\nu_{7}}}{f - \nu_{4}},
\end{equation*}
see also \cite[(2.13--2.14)]{Sak:2007:PDPETLF} and the $q$-$\Pain{IV}\to\Pain{IV}$ degeneration in 	\cite{Sak:2001:RSAWARSGPE}.



\subsection{The q-$\dPain{E_{3}^{(1)}/A_{5}^{(1)};b}$ Surface Family} 
\label{sub:case-b}

Let us now turn our attention to the (b)-model that corresponds to the choice of root bases for the surface and symmetry sub-lattices shown on 
Figure~\ref{fig:roots-bases-b}.
\begin{figure}[ht]
\begin{equation}\label{eq:d-roots-KNY-b}
	\raisebox{-45pt}{\begin{tikzpicture}[
			elt/.style={circle,draw=black!100,thick, inner sep=0pt,minimum size=2mm}]
			\path (-0.6,-0.3) node (a1) [elt] {}
			( 0.6,-0.3) node (a2) [elt] {}
			( 0,0.67) node (a0) [elt] {}
			( -0.6,-1.2) node (a3) [elt] {}
			( 0.6,-1.2) node (a4) [elt] {};
            \draw [black] (a1) -- (a2) -- (a0) -- (a1);
			\draw [black,double distance=2pt] (a3) -- (a4);
            \node at ($(a1.west) + (-0.3,0.0)$) {$\alpha_{1}$};
            \node at ($(a2.east) + (0.3,0.0)$) {$\alpha_{2}$};
            \node at ($(a0.north) + (0,0.3)$) {$\alpha_{0}$};
            \node at ($(a3.south) + (0,-0.3)$) {$\alpha_{3}$};
            \node at ($(a4.south) + (0,-0.3)$) {$\alpha_{4}$};			
	\end{tikzpicture}}\quad 
			\begin{aligned}
			\alpha_{0} &= \mathcal{H}_{f} + \mathcal{H}_{g} - \mathcal{E}_{2367},	\\		
			\alpha_{1} &= \mathcal{H}_{f}  - \mathcal{E}_{48}, \\
			\alpha_{2} &= \mathcal{H}_{g} - \mathcal{E}_{15},  \\
			\alpha_{3} &=\mathcal{H}_{f} + \mathcal{H}_{g} - \mathcal{E}_{2358}\\
			\alpha_{4} &= \mathcal{H}_{f} + \mathcal{H}_{g} - \mathcal{E}_{1467},\\
			\end{aligned}
	\qquad	\qquad 	
	\raisebox{-45pt}{\begin{tikzpicture}[
			elt/.style={circle,draw=black!100,thick, inner sep=0pt,minimum size=2mm}]
		\path 				(0,1)	node 	(d0) 	[elt, label=above:{$\delta_{0}$} ] {}
		        (-{sqrt(3)/2},1/2)	node 	(d1) 	[elt, label=left:{$\delta_{1}$} ] {}
		        (-{sqrt(3)/2},-1/2) node  	(d2)	[elt, label=left:{$\delta_{2}$} ] {}
		     	   			(0,-1)	node  	(d3) 	[elt, label=below:{$\delta_{3}$} ] {}
		        ({sqrt(3)/2},-1/2) 	node  	(d4) 	[elt, label=right:{$\delta_{4}$} ] {}
		        ({sqrt(3)/2},1/2) 	node 	(d5) 	[elt, label=right:{$\delta_{5}$} ] {};
		\draw [black,line width=1pt ] (d0) -- (d1) -- (d2) -- (d3) -- (d4) -- (d5)--(d0);
	\end{tikzpicture}}\quad 
			\begin{aligned}
			\delta_{0} &= \mathcal{H}_{f} - \mathcal{E}_{12},	\\		
			\delta_{1} &= \mathcal{E}_{2} - \mathcal{E}_{3}, \\
			\delta_{2} &= \mathcal{H}_{g} - \mathcal{E}_{24},  \\
			\delta_{3} &=\mathcal{H}_{f}- \mathcal{E}_{56},\\
			\delta_{4} &= \mathcal{E}_{6}-\mathcal{E}_{7},\\
			\delta_{5} &= \mathcal{H}_{g} - \mathcal{E}_{68};
			\end{aligned}
\end{equation}
	\caption{The symmetry (left) and the surface (right) root bases for the q-$\dPain{E_{3}^{(1)}/A_{5}^{(1)};b}$}
	\label{fig:roots-bases-b}
\end{figure}

\subsubsection{The point configuration} 
\label{ssub:points-b}

The decomposition of the anti-canonical divisor class into the classes of irreducible components $\delta_{i}$ above,
\begin{equation*}
	-\mathcal{K}_{\mathcal{X}} = [H_{f}-E_{1}-E_{2}] + [E_{2} - E_{3}] + [H_{g} - E_{2} - E_{4}] + [H_{f} - E_{5} - E_{6}] +  [E_{6} - E_{7}] + [H_{g} - E_{6} - E_{8}],
\end{equation*}
can be realized by the point configuration on Figure~\ref{fig:KNY-b-pt-conf} where the coordinates of the base points are again given in terms of the 
the \cite{KajNouYam:2017:GAPE}  parameters $\kappa_{i}$, $\nu_{j}$:
\begin{equation*}\label{eq:base-pt-KNY-b}
	\begin{aligned}
		&p_{1}\left(\frac{1}{f}=0,g=\frac{1}{\nu_{1}}\right),\ 
		p_{2}\left(\frac{1}{f}=0,\frac{1}{g}=0\right)\leftarrow p_{3}\left(\frac{1}{f}=0,\frac{f}{g} = -\nu_{2} \nu_{3}\right),p_{4}\left(f = \nu_{4},\frac{1}{g}=0\right),\\ 
		&p_{5}\left(f=0,g=\frac{\nu_{5}}{\kappa_{2}}\right),\ 
		p_{6}\left(f=0,g=0\right)\leftarrow p_{7}\left(f=0,\frac{g}{f} = -\frac{\nu_{6} \nu_{7}}{\kappa_{1}\kappa_{2}}\right),\ 
		p_{8}\left(f=\frac{\kappa_{1}}{\nu_{8}},g=0\right).
	\end{aligned}
\end{equation*}
Using the same symplectic form 
$\omega = \frac{df \wedge dg}{fg}$ as in the (a)-case, we see that in this case the root variables are slightly different,
\begin{equation}\label{eq:root-vars-KNY-b}
	a_{0} = \frac{\kappa_{1} \kappa_{2}}{\nu_{2}\nu_{3}\nu_{6}\nu_{7}},\ a_{1} = \frac{\kappa_{1}}{\nu_{4}\nu_{8}},\ 
	a_{2} = \frac{\kappa_{2}}{\nu_{1}\nu_{5}},\ 
	a_{3} = \frac{\kappa_{1} \kappa_{2}}{\nu_{2}\nu_{3}\nu_{5}\nu_{8}},\ 
	a_{4} = \frac{\kappa_{1} \kappa_{2}}{\nu_{1}\nu_{4}\nu_{6}\nu_{7}},\ 
\end{equation}
but they satisfy the same constraint $a_{0}a_{1}a_{2} = a_{3}a_{4} = q$ as before.
Using the root variables, the coordinates of the base points are
\begin{equation*}\label{eq:base-pt-root-b}
	\begin{aligned}
		&p_{1}\left(\frac{1}{f}=0,g=\frac{1}{\nu_{1}}\right),\ 
		p_{2}\left(\frac{1}{f}=0,\frac{1}{g}=0\right)\leftarrow p_{3}\left(\frac{1}{f}=0,\frac{f}{g} = -\frac{a_{4}}{a_{0}}\nu_{1}\nu_{4}\right),
		p_{4}\left(f = \nu_{4},\frac{1}{g}=0\right),\\ 
		&p_{5}\left(f=0,g=\frac{1}{a_{2}\nu_{1}}\right),\ 
		p_{6}\left(f=0,g=0\right)\leftarrow p_{7}\left(f=0,\frac{g}{f} = -\frac{1}{a_{4}\nu_{1}\nu_{4}}\right),\ 
		p_{8}\left(f=a_{1}\nu_{4},g=0\right).
	\end{aligned}
\end{equation*}
where the free parameters $\nu_{1}$ and $\nu_{4}$ can again be set to $1$ using the rescaling action. 

\begin{figure}[ht]
	\begin{center}		
	\begin{tikzpicture}[>=stealth,basept/.style={circle, draw=red!100, fill=red!100, thick, inner sep=0pt,minimum size=1.2mm}]
	\begin{scope}[xshift=0cm,yshift=0cm]
	\draw [black, line width = 1pt] (-0.2,0) -- (3.2,0)	node [pos=0,left] {\small $H_{g}$} node [pos=1,right] {\small $p=0$};
	\draw [black, line width = 1pt] (-0.2,3) -- (3.2,3) node [pos=0,left] {\small $H_{g}$} node [pos=1,right] {\small $p=\infty$};
	\draw [black, line width = 1pt] (0,-0.2) -- (0,3.2) node [pos=0,below] {\small $H_{f}$} node [pos=1,xshift = -7pt, yshift=5pt] {\small $q=0$};
	\draw [black, line width = 1pt] (3,-0.2) -- (3,3.2) node [pos=0,below] {\small $H_{f}$} node [pos=1,xshift = 7pt, yshift=5pt] {\small $q=\infty$};
	\node (p1) at (3,1) [basept,label={[xshift = -7pt, yshift=-3pt] \small $p_{1}$}] {};
	\node (p2) at (3,3) [basept,label={[xshift = -7pt, yshift=-15pt] \small $p_{2}$}] {};
	\node (p3) at (3.5,2.5) [basept,label={[xshift = 10pt, yshift=-8pt] \small $p_{3}$}] {};
	\node (p4) at (1,3) [basept,label={[xshift = 0pt, yshift=-15pt] \small $p_{4}$}] {};
	\node (p5) at (0,2) [basept,label={[xshift = 10pt,yshift=-8pt] \small $p_{5}$}] {};
	\node (p6) at (0,0) [basept,label={[xshift = 10pt,yshift=-2pt] \small $p_{6}$}] {};
	\node (p7) at (-0.5,0.5) [basept,label={[xshift = -7pt,yshift=-2pt] \small $p_{7}$}] {};
	\node (p8) at (2,0) [basept,label={[xshift = 0pt,yshift=-3pt] \small $p_{8}$}] {};
	\draw [red, line width = 0.8pt, ->] (p3) -- (p2);
	\draw [red, line width = 0.8pt, ->] (p7) -- (p6);
	\end{scope}
	\draw [->] (7,1.5)--(5,1.5) node[pos=0.5, below] {$\operatorname{Bl}_{p_{1}\cdots p_{8}}$};
	\begin{scope}[xshift=9.5cm,yshift=0cm]
	\draw[blue, line width = 1pt] (-1.2,3)--(2.7,3) node [pos=0,left] {\small $H_{g}- E_{2} - E_{4}$};
	\draw[blue, line width = 1pt] (-0.6,0) -- (3.3,0) node [pos=1,right] {\small $H_{g}-E_{6} - E_{8}$};
	\draw[blue, line width = 1pt] (3,-0.3)--(3,2.7) node [pos=0,below] {\small $ H_{f} - E_{1} - E_{2}$};
	\draw[blue, line width = 1pt] (-1,0.3)--(-1,3.2) node [pos=1,above] {\small $H_{f} - E_{5}- E_{6}$};
	\draw[blue, line width = 1pt] (2.2,3.2)--(3.2,2.2) node [pos = 1, right] {\small $ E_{2}- E_{3}$};
	\draw[red,  line width = 1pt] (2.3,2.3)--(3,3) node [pos=0,below left] {\small $E_3$};
	\draw[blue, line width = 1pt] (-1.2,0.8)--(-0.2,-0.2) node [pos=0,left] {\small $E_6-E_7$};
	\draw[red,  line width = 1pt] (-1,0) -- (-0.3,0.7) node [pos=1,above right] {\small $E_{7}$};
	\draw[red,  line width = 1pt] (0.7,2.6) -- (1.4,3.3) node [pos=0,below] {\small $E_{4}$};	
	\draw[red,  line width = 1pt] (-1.3,1.7)--(-0.6,2.4) node [pos=1,right] {\small $E_{5}$};
	\draw[red,  line width = 1pt] (0.7,-0.3) -- (1.4,0.4) node [pos=1,above] {\small $E_{8}$};	
	\end{scope}
	\end{tikzpicture}
	\end{center}
	\caption{The model type (b) Sakai surface for the q-$\dPain{E_{3}^{(1)}/A_{5}^{(1)}}$ family} 
	\label{fig:KNY-b-pt-conf}
\end{figure}


\subsubsection{The extended affine Weyl symmetry group} 
\label{ssub:birrep-b}

Since the geometric realization of our surface has changed, the birational representation will change as well. 
In particular, even though the action of the automorphisms $\sigma_{i}$ on the symmetry and surface roots is the same as on 
Figure~\ref{fig:automs}, its realization on $\operatorname{Pic}(\mathcal{X})$ is different,
\begin{align*}
	\sigma_{0} &= w_{\mathcal{E}_{5} - \mathcal{E}_{7}} \circ 	w_{\mathcal{E}_{4} - \mathcal{E}_{8}} \circ 	
		w_{\mathcal{E}_{1} - \mathcal{E}_{3}} \circ 
		w_{\mathcal{E}_{2} - \mathcal{E}_{6}} \circ 
		w_{\mathcal{H}_{f} - \mathcal{E}_{2} - \mathcal{E}_{6}},	\\
	\sigma_{1} &= w_{\mathcal{E}_{5} - \mathcal{E}_{8}} \circ 	
		w_{\mathcal{E}_{1} - \mathcal{E}_{4}} \circ 
		w_{\mathcal{H}_{f} - \mathcal{H}_{g}}.	
\end{align*}

The birational representation of $\widetilde{W}(E_{3}^{(1)})=\operatorname{Aut}(E_{3}^{(1)})\ltimes W(E_{3}^{(1)})$ on the 
(b)-family of surfaces is described in the next Lemma.


\begin{lemma}\label{lem:bir-repE3-b}
	Generators of the extended affine Weyl group $\widetilde{W}(E_{3}^{(1)})$ transform an initial point configuration
	in the (b)-model, expressed in both the KNY parameters and the root variables as
	\begin{equation}\label{eq:base-pt-b}
    \left(\begin{matrix}
    \kappa_{1} & \nu_{1} & \nu_{3} & \nu_{5} & \nu_{7}\\
    \kappa_{2} & \nu_{2} & \nu_{4} & \nu_{6} & \nu_{8}
    \end{matrix}\ ;\ \begin{matrix} f\\ g \end{matrix}\right) \sim
	\left(\begin{matrix}
    a_{0} & a_{1} & a_{2} \\
    a_{3} & a_{4}
    \end{matrix}\ ;\ \begin{matrix} f\\ g \end{matrix}\right), 
	\end{equation}	
	by changing the parameters and the affine coordinates $(f,g)$ as follows:
	\begin{align*}
	w_0&:
		\left(\begin{matrix}
		\frac{\kappa_{1}^{2} \kappa_{2}}{\nu_{2}\nu_{3}\nu_{6}\nu_{7}} & \nu_{1} & \nu_{3} & \nu_{5} & \nu_{7}\\[5pt]
		\frac{\kappa_{1}\kappa_{2}^{2}}{\nu_{2}\nu_{3}\nu_{6}\nu_{7}} & \nu_{2}\left( \frac{\kappa_{1} \kappa_{2}}{\nu_{2}\nu_{3}\nu_{6}\nu_{7}} \right)^{2} & 
		\nu_{4} & \nu_{6}\left( \frac{\kappa_{1} \kappa_{2}}{\nu_{2}\nu_{3}\nu_{6}\nu_{7}} \right)^{2} & \nu_{8}
		\end{matrix}\ ;\ \begin{matrix} 
		\frac{\kappa _1 \kappa _2 f\left(f+\nu _2 \nu _3 g\right)}{\nu _2 \nu _3 \left( \nu _6 \nu _7 f +  \kappa _1 \kappa _2 g\right)}\\[5pt]
		\frac{\nu_{6}\nu_{7}g(f + \nu_{2}\nu_{3} g)}{\nu_{6} \nu_{7} f + \kappa_{1} \kappa_{2} g}\end{matrix} \right) \\
	&\qquad \sim 	
		\left(\begin{matrix}
		\frac{1}{a_{0}} & a_{0}a_{1} & a_{0}a_{2} \\[5pt]
		a_{3} & a_{4}
		\end{matrix}\ ;\ \begin{matrix} \frac{f(a_{0} f + a_{4} g)}{f + a_{4} g} & \frac{g(a_{0} f + a_{4} g)}{a_{0}(f + a_{4} g)}\end{matrix}\right), \\[7pt]
	w_{1}&:
	    \left(\begin{matrix}
	   \frac{(\nu_{4}\nu_{8})^{2}}{\kappa_{1}} & \nu_{1} & \nu_{3} & \nu_{5} & \nu_{7}\\[5pt]
	    \frac{\kappa_{1} \kappa_{2}}{\nu_{4}\nu_{8}} & \frac{\nu_{2} \nu_{4} \nu_{8}}{\kappa_{1}} & \nu_{4} & \frac{\nu_{4} \nu_{6} \nu_{8}}{\kappa_{1}} & \nu_{8}
	    \end{matrix}\ ;\ 
		\begin{matrix} 		\frac{\nu_{4} \nu_{8} f}{\kappa_{1}}\\[5pt]
		\frac{\nu_{8}(f-\nu_{4})g}{\nu_{8} f - \kappa_{1}}\end{matrix} \right) \sim 			
		\left(\begin{matrix}
	    a_{0}a_{1} & \frac{1}{a_{1}} & a_{1}a_{2} \\[5pt]
	    a_{3} & a_{4}
	    \end{matrix}\ ;\ \begin{matrix} \frac{f}{a_{1}}\\[5pt] 
		\frac{(f-1)g}{(f-a_{1})} \end{matrix}\right), 	\\[7pt]
	w_{2}&:
	    \left(\begin{matrix}
	    \frac{\kappa_{1}\kappa_{2}}{\nu_{1}\nu_{5}} & \nu_{1} & \nu_{3} & \nu_{5} & \nu_{7}\\[5pt]
	    \frac{(\nu_{1}\nu_{5})^{2}}{\kappa_{2}} & \frac{\nu_{1}\nu_{2}\nu_{5}}{\kappa_{2}} & \nu_{4} & \frac{\nu_{1} \nu_{5} \nu_{6}}{\kappa_{2}} & \nu_{8}
	    \end{matrix}\ ;\ \begin{matrix} \frac{\kappa_{2}f(\nu_{1}g - 1)}{\nu_{1}(\kappa_{2}g-\nu_{5})}\\[5pt] \frac{\kappa_{2} g}{\nu_{1} \nu_{5}} \end{matrix}\right) \sim
		\left(\begin{matrix}
	    a_{0}a_{2} & a_{1}a_{2} & \frac{1}{a_{2}} \\[5pt]
	    a_{3} & a_{4}
	    \end{matrix}\ ;\ \begin{matrix} \frac{a_{2} f (g-1)}{a_{2} g - 1}\\[5pt] a_{2} g \end{matrix}\right), 	\\[7pt]	
	w_{3}&:
	    \left(\begin{matrix}
	    \kappa_{1} & \nu_{1} & \nu_{3} & \nu_{5} & \nu_{7}\\[5pt]
	    \kappa_{2} & \nu_{2}\left(\frac{\kappa_{1}\kappa_{2}}{\nu_{2}\nu_{3}\nu_{5}\nu_{8}}\right)^{2} & \nu_{4} & 
			\nu_{6} \left(\frac{\nu_{2}\nu_{3}\nu_{5}\nu_{8}}{\kappa_{1}\kappa_{2}}\right)^{2} & \nu_{8}
	    \end{matrix}\ ;  \begin{matrix} 
		\frac{\kappa_{1}\kappa_{2}f(\nu_{8}(f + \nu_{2}\nu_{3}g)-\kappa_{1})}{\nu_{2}\nu_{3}\nu_{8}(\nu_{5}\nu_{8}f + \kappa_{1}(\kappa_{2}g - \nu_{5}))}
		\\[5pt] 
		\frac{\nu_{5}\nu_{8} g (\kappa_{2} f + \nu_{2} \nu_{3} (\kappa_{2} g - \nu_{5}))}{\kappa_{2}(\nu_{5} \nu_{8} f + \kappa_{1} (\kappa_{2} g - \nu_{5}))}
		\end{matrix}\right) \\ 
		&\qquad \sim 
		\left(\begin{matrix}
	    a_{0} & a_{1} & a_{2} \\[5pt]
	    \frac{1}{a_{3}} & a_{3}^{2}a_{4}
	    \end{matrix}\ ;\ \begin{matrix} 
		\frac{a_{1}a_{2}f (a_{0}(f-a_{1}) + a_{4} g)}{a_{4}(f + a_{1}(a_{2} g - 1))}\\[5pt] 
		\frac{g(a_{0} a_{2} f + a_{4}(a_{2} g - 1))}{a_{0}a_{2}(f+ a_{1}(a_{2} g - 1))} \end{matrix}\right), 	\\[7pt]	
	w_{4}&:
	    \left(\begin{matrix}
	    \kappa_{1} & \nu_{1} & \nu_{3} & \nu_{5} & \nu_{7}\\[5pt]
	    \kappa_{2} & \nu_{2} \left(\frac{\nu_{1}\nu_{4}\nu_{6}\nu_{7}}{\kappa_{1}\kappa_{2}}\right)^{2} & \nu_{4} & \nu_{6} 			\left(\frac{\kappa_{1}\kappa_{2}}{\nu_{1}\nu_{4}\nu_{6}\nu_{7}}\right)^{2} & \nu_{8}
	    \end{matrix}\ ;\ \begin{matrix} 
		\frac{\nu_{4}\nu_{6}\nu_{7}f(f - \nu_{1}g(f - \nu_{4}))}{\nu_{4}\nu_{6}\nu_{7}f- \kappa_{1}\kappa_{2}g(f-\nu_{4})}\\[5pt] 
		\frac{\kappa_{1}\kappa_{2} g (f - \nu_{1}g(f - \nu_{4}))}{\nu_{1}\nu_{4}(f(1 - \nu_{1} g)+ \kappa_{1}\kappa_{2} g)}
			\end{matrix}\right) \\
		&\qquad \sim
		\left(\begin{matrix}
	    a_{0} & a_{1} & a_{2} \\[5pt]
	    a_{3}a_{4}^{2} &\frac{1}{a_{4}} 
	    \end{matrix}\ ;\ \begin{matrix} 
		\frac{f(f + g - fg )}{f - a_{4}(f-1) g}\\[5pt] \frac{a_{4}g(f g - f - g)}{f(g-1) - a_{4}g} \end{matrix}\right),  \\[7pt]
	\sigma_{0}&:
	    \left(\begin{matrix}
	    \frac{(\nu_{4}\nu_{8})^{2}}{\kappa_{1}} & \nu_{1} & \nu_{3} & \nu_{5} & \nu_{7}\\[5pt]
	    \frac{\nu_{1}\nu_{2}\nu_{3}\nu_{5}\nu_{6} \nu_{7}}{\kappa_{1} \kappa_{2}} & \frac{\nu_{2} \nu_{4} \nu_{8}}{\kappa_{1}} & 
			\nu_{4} & \frac{\nu_{4}\nu_{6}\nu_{8}}{\kappa_{1}} & \nu_{8}
	    \end{matrix}\ ;\ \begin{matrix} \frac{\nu_{4}\nu_{8}f}{\kappa_{1}}\\[5pt] 
			-\frac{f}{\nu_{1}\nu_{2}\nu_{3}g} \end{matrix}\right) \sim
		\left(\begin{matrix}
	    \frac{1}{a_{2}} & \frac{1}{a_{1}} & \frac{1}{a_{0}} \\[5pt]
	    \frac{1}{a_{4}} & \frac{1}{a_{3}}
	    \end{matrix}\ ;\ \begin{matrix} \frac{f}{a_{1}}\\[5pt] -\frac{a_{0}f}{a_{4}g} \end{matrix}\right), 	\\[7pt]	
	\sigma_{1}&:
	    \left(\begin{matrix}
	    \frac{\nu_{1}\nu_{4}\nu_{5}\nu_{8}}{\kappa_{2}} & \nu_{1} & \nu_{3} & \nu_{5} & \nu_{7}\\[5pt]
	    \frac{\nu_{1}\nu_{4}\nu_{5}\nu_{8}}{\kappa_{1}} & \frac{(\nu_{1}\nu_{4})^{2}}{\nu_{2}\nu_{3}^{2}} & \nu_{4} & \frac{(\nu_{5}\nu_{8})^{2}}{\nu_{6}\nu_{7}^{2}} & \nu_{8}
	    \end{matrix}\ ;\ \begin{matrix} \nu_{1}\nu_{4}g\\[5pt] 
			\frac{f}{\nu_{1}\nu_{4}} \end{matrix}\right) \sim
		\left(\begin{matrix}
	    \frac{1}{a_{0}} & \frac{1}{a_{2}} & \frac{1}{a_{1}} \\[5pt]
	    \frac{1}{a_{3}} & \frac{1}{a_{4}}
	    \end{matrix}\ ;\ \begin{matrix} g\\[5pt] f \end{matrix}\right), 	\\[7pt]				
	\sigma_{2}&:
	    \left(\begin{matrix}
	   	\frac{\kappa_{1}\kappa_{2}\nu_{4}\nu_{8}}{\nu_{2}\nu_{3}\nu_{6}\nu_{7}} & \nu_{1} & \nu_{3} & \nu_{5} & \nu_{7}\\[5pt]
	    \frac{\kappa_{1}\nu_{1}\nu_{5}}{\nu_{4}\nu_{8}} & \frac{\kappa_{1}\nu_{1}^{2}\nu_{4}}{\nu_{2} \nu_{3}^{2} \nu_{8}} & \nu_{4} & 
			\frac{\kappa_{1} \nu_{5}^{2} \nu_{8}}{\nu_{4}\nu_{6}\nu_{7}^{2}} & \nu_{8}
	    \end{matrix}\ ;\ \begin{matrix} -\frac{\nu_{4}f}{\nu_{2}\nu_{3}g}\\[5pt] 
			\frac{\nu_{8}f}{\kappa_{1}\nu_{1}}\end{matrix}\right)
		\sim
		\left(\begin{matrix}
	    a_{2} & a_{0} & a_{1} \\[5pt]
	    a_{4} & a_{3}
	    \end{matrix}\ ;\ \begin{matrix} -\frac{a_{0}f}{a_{4}g}\\[5pt] \frac{f}{a_{1}} 		\end{matrix}\right).				
	\end{align*}
\end{lemma}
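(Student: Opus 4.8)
The plan is to follow the same standard scheme used for the (a)-model in Lemma~\ref{lem:bir-repE3-a}, see \cite{DzhFilSto:2020:RCDOPWHWDPE,DzhTak:2018:SASGTDPE} for the method explained in detail. The starting point is that the action of each generator on $\operatorname{Pic}(\mathcal{X})$ is already completely determined combinatorially: for the reflections $w_i$ it is given by \eqref{eq:root-refl} applied to the roots $\alpha_i$ of the (b)-model \eqref{eq:d-roots-KNY-b}, and for the automorphisms $\sigma_i$ by the explicit compositions of reflections recorded just above. Consequently the induced transformation of the root variables is immediate from the intersection pairing, since a reflection $w_i$ sends $a_j\mapsto a_j\,a_i^{\alpha_i\bullet\alpha_j}$, in particular $a_i\mapsto a_i^{-1}$; one checks directly that this reproduces $\overline{a}_0=1/a_0$, $\overline{a}_1=a_0a_1$, $\overline{a}_2=a_0a_2$, $\overline{a}_3=a_3$, $\overline{a}_4=a_4$ for $w_0$, consistent with $\alpha_0$ being adjacent only to $\alpha_1,\alpha_2$, and likewise $\overline{a}_4=a_3^2a_4$ for $w_3$, consistent with the double bond $\alpha_3\bullet\alpha_4=2$. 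The genuine content of the Lemma is therefore the explicit realization of each lattice map as a Cremona transformation of the affine coordinates $(f,g)$ together with the matching evolution of the KNY parameters $\kappa_i,\nu_j$ in \eqref{eq:base-pt-b}.

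To recover the coordinate maps I would use the fact that a coordinate function is read off from a pencil of curves: the level sets $\{f=\text{const}\}$ and $\{g=\text{const}\}$ have classes $\mathcal{H}_f$ and $\mathcal{H}_g$, so the transformed coordinates $\overline{f},\overline{g}$ are the ratios of sections of the linear systems attached to the image classes $w(\mathcal{H}_f)$, $w(\mathcal{H}_g)$, subject to passing through the prescribed base points with the multiplicities dictated by the transformed exceptional classes $w(\mathcal{E}_j)$. This fixes $\overline{f},\overline{g}$ up to the residual gauge freedom (a M\"obius action on each ruling together with the overall rescaling), which I would then pin down by imposing that $w$ carry the base point configuration in \eqref{eq:base-pt-b} to its image, evaluating root variables through the period map of the symplectic form $\omega=\frac{df\wedge dg}{fg}$. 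The fiber-class reflections are the easy cases: $w_1$ (reflection in $\alpha_1=\mathcal{H}_f-\mathcal{E}_{48}$) and $w_2$ (in $\alpha_2=\mathcal{H}_g-\mathcal{E}_{15}$) are elementary transformations acting nontrivially on only one ruling, and $\sigma_1$, whose $\operatorname{Pic}$-realization contains the ruling swap $w_{\mathcal{H}_f-\mathcal{H}_g}$, reduces to $(f,g)\mapsto(g,f)$ up to rescaling; the automorphisms $\sigma_0,\sigma_2$ are of the same elementary type.

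The substantive computations are the reflections $w_0$, $w_3$, $w_4$ in the bidegree-$(1,1)$ roots $\mathcal{H}_f+\mathcal{H}_g-\mathcal{E}_{ijkl}$: each such root is the class of a $(1,1)$-curve through four of the eight base points, so its reflection is realized by a genuine quadratic Cremona transformation, and one must verify that the explicit rational map resolves those four base points with the correct multiplicities and, after cancellation, returns precisely the transformed configuration of \eqref{eq:base-pt-b}. The $w_3$ map, being a reflection in $\alpha_3=\mathcal{H}_f+\mathcal{H}_g-\mathcal{E}_{2358}$ that also rescales several $\nu_j$, is the most elaborate of these. I expect the main obstacle to be organizational rather than conceptual: keeping the M\"obius and rescaling normalizations consistent across all the generators so that the KNY-parameter evolution and the root-variable evolution match simultaneously, and correctly handling the sign flips for $\sigma_0$ and $\sigma_1$, which reverse $\omega$ and hence invert every $a_i$. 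As consistency checks I would verify that each $w_i$ squares to the identity on the surface family and that the braid and commutation relations encoded by the Dynkin diagram are respected, which together with the base-point bookkeeping certify the stated formulas.
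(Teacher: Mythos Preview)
Your proposal is correct and follows essentially the same approach as the paper: the paper's proof simply states that it is ``similar to Lemma~\ref{lem:bir-repE3-a}'' (which in turn defers to \cite{DzhFilSto:2020:RCDOPWHWDPE,DzhTak:2018:SASGTDPE}) and records the root-variable parameterization used to recover the KNY-parameter evolution, which is exactly the scheme you outline. Your write-up is in fact more detailed than the paper's, spelling out the pencil-of-curves mechanism for the coordinate maps and the sign flip under $\sigma_0,\sigma_1$, both of which the paper only alludes to.
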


The proof of this Lemma is similar to Lemma~\ref{lem:bir-repE3-a}. Note that we use the following root variable parameterization, 
obtained from \eqref{eq:root-vars-KNY-b}:
\begin{equation*}
	\kappa_{1} = a_{1}\, \nu_{4}\nu_{8},\quad \kappa_{2} = a_{2} \, \nu_{1}\nu_{5},\quad 
	\nu_{2} = \frac{a_{4}}{a_{0}}\, \frac{\nu_{1}\nu_{4}}{\nu_{3}},\quad  \nu_{6} = \frac{a_{1}a_{2}}{a_{4}}\, \frac{\nu_{5} \nu_{8}}{\nu_{7}}.
\end{equation*}


\subsubsection{The discrete Painlev\'e equation $[01\overline{1}00]$ on the q-$\dPain{E_{3}^{(1)}/A_{5}^{(1)};b}$ Surface Family} 
\label{ssub:dP-b}

The evolution of parameters \eqref{eq:KNY-par-evol}
induces the following evolution on the root variables \eqref{eq:root-vars-KNY-b}:
\begin{equation*}
	\overline{a}_{0} = a_{0},\quad \overline{a}_{1} = \frac{a_{1}}{q}, \quad \overline{a}_{2} = q a_{2},\quad \overline{a}_{3} = a_{3},\quad \overline{a}_{4} = a_{4},
\end{equation*}
that in turn induces the translation \eqref{eq:dP-A5-b-trans} on the symmetry roots,
\begin{equation*}
\phi_*:\upalpha=\langle\alpha_0,\alpha_1,\alpha_2;\alpha_3,\alpha_{4}\rangle\mapsto
\phi_*(\upalpha)=\upalpha + \langle 0,1,-1;0,0 \rangle\delta.
\end{equation*}
This translation can be expressed in terms of the generators as 
$\phi = \sigma_{2}^{4}\circ w_{0} \circ w_{2} = w_{1} \circ w_{0} \circ \sigma_{2}^{4}$, resulting in the mapping 
\begin{equation*}
	\phi(f,g) = \left( - \frac{\nu_{1}\nu_{2} \nu_{3} \nu_{4}g(\kappa_{2} g - \nu_{5})}{\kappa_{2} f (\nu_{1}g - 1)},
	\frac{\nu_{1}\nu_{2}\nu_{3}\nu_{4}(\kappa_{2}g - \nu_{5})(\nu_{5}\nu_{6}\nu_{7}f(\nu_{1}g-1) + \kappa_{1}\kappa_{2}g(\kappa_{2}g - \nu_{5}))}{
	\kappa_{1}\kappa_{2}^{2}f(\nu_{1}g - 1)(\kappa_{2}f(\nu_{1}g - 1)+ \nu_{1}\nu_{2}\nu_{3}g(\kappa_{2}g - \nu_{5}))}\right),
\end{equation*}
which is equivalent to \cite[(8.14)]{KajNouYam:2017:GAPE},
\begin{equation*}
	f \overline{f} = - \nu_{2} \nu_{3} \nu_{4}\frac{g\left(g - \frac{\nu_{5}}{\kappa_{2}}\right)}{g-\frac{1}{\nu_{1}}},\quad 
	g \underline{g} = -\frac{1}{\nu_{1}\nu_{2}\nu_{3}}\, \frac{f\left(f - \frac{\kappa_{1}}{\nu_{8}}\right)}{f - \nu_{4}},
\end{equation*}
see also \cite[(2.15--2.16)]{Sak:2007:PDPETLF} and the $q$-$\Pain{III}\to\Pain{III}$ degeneration in \cite{Sak:2001:RSAWARSGPE}.


\section{The Identification Procedure} 
\label{sec:identification}

In this section we follow the procedure introduced in  \cite{DzhFilSto:2020:RCDOPWHWDPE} to formally identify recurrence 
\eqref{eq:qxn-back-for} as a discrete Painlev\'e equation. This procedure consists of several steps that can be grouped
into two parts. The first part is geometric, where we first resolve the singularities of the mapping, linearize the mapping
on the Picard lattice of the resulting algebraic surface, determine the decomposition of the unique anti-canonical divisor
into irreducible components whose intersection configuration is described by an affine Dynkin diagram. The type of this
diagram is known as the surface type of the equation in the Sakai classification scheme. 
We do this part in Section~\ref{sub:surface_type}.

The second part is algebraic, where we do some preliminary change of basis between the Picard lattices of our surface and
of one of the model examples so that the surface roots match. This allows us to obtain the expressions for the standard 
symmetry roots in terms of the basis of the Picard lattice of our $q$-Laguerre surface. Using the evolution on the Picard 
lattice we can determine the translation direction and see if it is conjugate to one of the known examples. If so, we can 
do that conjugation to find the final correspondence between the bases of these two Picard lattices and then determine the 
underlying change of coordinates. We do this part in Section~\ref{sub:matching}, except that in our case the $q$-Laguerre dynamic is
conjugated to a composition of two standard mappings.  

Since the procedure we follow here is by now quite standard, we only outline the computations and refer the interested reader to 
\cite{DzhFilSto:2020:RCDOPWHWDPE} and \cite{KajNouYam:2001:SFQE} for details.

\subsection{Resolving the Singularities, Identifying the Surface Type, 
and Linearizing the Dynamics on the Picard Lattice} 
\label{sub:surface_type}

We begin by regularizing the mapping given by recurrence~\eqref{eq:qxn-back-for}. Note that this recurrence 
defines two half-step mappings, the forward half-step $\varphi_{1}^{(n)}: (x_{n},y_{n})\mapsto (x_{n},y_{n+1})$, where
\begin{equation}\label{eq:y-fwd}
	y_{n+1} = \frac{q^{2n+1}Q(T-x_{n})(x_{n}-1)+x_{n}(x_{n}y_{n}-1)}{x_{n}^2(x_{n}y_{n}-1)},
\end{equation}
and the backward half-step $\varphi_{2}^{(n)}:(x_{n},y_{n})\mapsto (x_{n-1},y_{n})$, where
\begin{equation}\label{eq:x-back}
	x_{n-1} = \frac{(y - q^{n})(x_{n} y_{n} -1) - q^{2n} Q (T y_{n} - 1) (y_{n}-1)}{y_{n}(y_{n}-q^{n})(x_{n}y_{n}-1)}.
\end{equation}
Each mapping is clearly birational, and so after inverting one of them we obtain the 
full forward and backward mappings $\varphi^{(n)} = \left(\varphi_{2}^{(n+1)}\right)^{-1} \circ \varphi_{1}^{(n)}: (x_{n},y_{n})\mapsto (x_{n+1},y_{n+1})$
and $(\varphi^{(n)})^{-1} = \left(\varphi_{1}^{(n-1)}\right)^{-1} \circ \varphi_{2}^{(n)}$. The explicit expressions for these
mappings are complicated and we omit them.

Extending these mappings to the compactification $\mathbb{P}^{1} \times \mathbb{P}^{1}$ and using the standard techniques described, e.g., in
\cite{DzhFilSto:2020:RCDOPWHWDPE} or \cite{HuDzhChe:2020:PLUEDPE}, we get the following base points of the mappings 
shown on Figure~\ref{fig:qLaguerre-pt-conf} (left)
\begin{equation}\label{eq:base-pt-qlaguerre}
	\begin{aligned}
		&q_{1}(x=1,y=1),\quad q_{2}\left(x=T,y=\frac{1}{T}\right),\\
		&q_{3}\left(x=0,\frac{1}{y}= 0\right)\leftarrow q_{4}\left(x= 0,\frac{1}{xy} = 0\right)\leftarrow 
		q_{5}\left(x = 0,\frac{1}{x^2 y} = \frac{q^{-2n-1}}{QT}\right),\\
		&q_{6}\left(\frac{1}{x} = 0, y= 0\right)\leftarrow q_{7}\left(\frac{1}{x} = 0, xy = 1-q^n Q\right),\quad 
		q_{8}\left(\frac{1}{x}=0,y=q^n\right).
	\end{aligned}
\end{equation}
Note that the points $q_{1}$, $q_{2}$, $q_{3}$, and $q_{6}$ lie on the $(1,1)$ curve given by the equation $xy-1=0$ in the affine $(x,y)$-chart.
Next, modifying the geometry by applying the blowup procedure at the base points, we extend each mapping from a birational transformation of $\mathbb{C}\times\mathbb{C}$ to an isomorphism on the family of Sakai 
surfaces parameterized by $q$, $Q$, $T$, and $n$ and described schematically on Figure~\ref{fig:qLaguerre-pt-conf} (right).

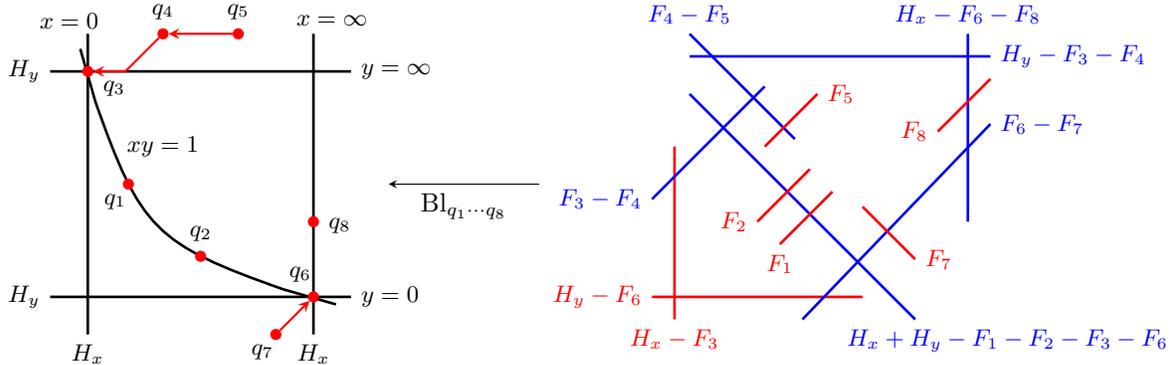
\begin{figure}[ht]
	\begin{center}		
	\begin{tikzpicture}[>=stealth,basept/.style={circle, draw=red!100, fill=red!100, thick, inner sep=0pt,minimum size=1.2mm}]
	\begin{scope}[xshift=0cm,yshift=0cm]
	\draw [black, line width = 1pt] (-0.5,0) -- (3.5,0)	node [pos=0,left] {\small $H_{y}$} node [pos=1,right] {\small $y=0$};
	\draw [black, line width = 1pt] (-0.5,3) -- (3.5,3) node [pos=0,left] {\small $H_{y}$} node [pos=1,right] {\small $y=\infty$};
	\draw [black, line width = 1pt] (0,-0.5) -- (0,3.5) node [pos=0,below] {\small $H_{x}$} node [pos=1,xshift = -7pt, yshift=5pt] {\small $x=0$};
	\draw [black, line width = 1pt] (3,-0.5) -- (3,3.5) node [pos=0,below] {\small $H_{x}$} node [pos=1,xshift = 7pt, yshift=5pt] {\small $x=\infty$};
	\draw [black, line width = 1pt] (-0.1,3.3) -- (0.1,2.6)  .. controls (0.7,0.9) and (0.9,0.7) .. (2.6,0.1)--(3.3,-0.1);
	\node at (1,2) {\small $x y = 1$};
	\node (q1) at (0.54,1.5) [basept,label={[xshift = -5pt, yshift=-15pt] \small $q_{1}$}] {};
	\node (q2) at (1.5,0.54) [basept,label={[xshift = 0pt, yshift=0pt] \small $q_{2}$}] {};
	\node (q3) at (0,3) [basept,label={[xshift = 10pt, yshift=-15pt] \small $q_{3}$}] {};
	\node (q4) at (1,3.5) [basept,label={[xshift = 0pt, yshift=0pt] \small $q_{4}$}] {};
	\node (q5) at (2,3.5) [basept,label={[xshift = 0pt, yshift=0pt] \small $q_{5}$}] {};
	\node (q6) at (3,0) [basept,label={[xshift=-5pt,yshift=0pt] \small $q_{6}$}] {};
	\node (q7) at (2.5,-0.5) [basept,label={[xshift=-5pt,yshift=-15pt] \small $q_{7}$}] {};
	\node (q8) at (3,1) [basept,label={[xshift=10pt,yshift=-10pt] \small $q_{8}$}] {};
	\draw [red, line width = 0.8pt, ->] (q4) -- (0.5,3) -- (q3);
	\draw [red, line width = 0.8pt, ->] (q5) -- (q4);
	\draw [red, line width = 0.8pt, ->] (q7) -- (q6);
	\end{scope}
	\draw [->] (6,1.5)--(4,1.5) node[pos=0.5, below] {$\operatorname{Bl}_{q_{1}\cdots q_{8}}$};
	\begin{scope}[xshift=8.5cm,yshift=0cm]
	\draw[red, line width = 1pt] (-1,0) -- (1.8,0) node [pos=0,left] {\small $H_{y}-F_{6}$};
	\draw[blue, line width = 1pt] (-0.5,3.2) -- (3.5,3.2) node [pos=1,right] {\small $H_{y}-F_{3} - F_{4}$};
	\draw[red, line width = 1pt] (-0.7,-0.3) -- (-0.7,2) node [pos=0,below] {\small $H_{x}-F_{3}$};
	\draw[blue, line width = 1pt] (3.2,1) -- (3.2,3.5) node [pos=1,above] {\small $H_{x}-F_{6}-F_{8}$};
	\draw[blue, line width = 1pt] (1,-0.3) -- (3.5,2.3) node [pos=1,right] {\small $F_{6}-F_{7}$};
	\draw[blue, line width = 1pt] (-1,1.3) -- (0.5,2.8) node [pos=0,left] {\small $F_{3}-F_{4}$};
	\draw[blue, line width = 1pt] (-0.5,3.5) -- (0.9,2.1) node [pos=0,above] {\small $F_{4}-F_{5}$};
	\draw[blue, line width = 1pt] (-0.5,2.7) -- (2.5,-0.3) node [pos=1,xshift = 35pt,yshift = -8 pt] {\small $H_{x} + H_{y} - F_{1} - F_{2} - F_{3}-F_{6}$};
	\draw[red, line width = 1pt] (0.7,0.7) -- (1.4,1.4) node [pos=0,below] {\small $F_{1}$};
	\draw[red, line width = 1pt] (0.4,1.0) -- (1.1,1.7) node [pos=0,left] {\small $F_{2}$};
	\draw[red, line width = 1pt] (0.5,2) -- (1.2,2.7) node [pos=1,right] {\small $F_{5}$};
	\draw[red, line width = 1pt] (2.8,2.2) -- (3.5,2.9) node [pos=0,left] {\small $F_{8}$};
	\draw[red, line width = 1pt] (1.8,1.2) -- (2.5,0.5) node [pos=1,right] {\small $F_{7}$};
	\end{scope}
	\end{tikzpicture}
	\end{center}
	\caption{The Sakai surface for the $q$-Laguerre recurrence} 
	\label{fig:qLaguerre-pt-conf}
\end{figure}	

Recall that the key object in the geometric approach to Painlev\'e equations is the \emph{Picard lattice} of this surface family. It is well-known that 
the Picard lattice of $\mathbb{P}^{1} \times \mathbb{P}^{1}$ is generated by the classes of the coordinate lines, 
$\operatorname{Pic}(\mathbb{P}^{1} \times \mathbb{P}^{1}) = \operatorname{Span}_{\mathbb{Z}}\{\mathcal{H}_{x},\mathcal{H}_{y}\}$.  
Each blowup is a surgery that adds a class of the exceptional divisor $\mathcal{F}_{i} = [F_{i}]$ to the Picard lattice and so for each 
surface  $\mathcal{X}$ in the family we get 
$\operatorname{Pic}(\mathcal{X}) = \operatorname{Span}_{\mathbb{Z}}\{\mathcal{H}_{x},\mathcal{H}_{y},\mathcal{F}_{1},\ldots,\mathcal{F}_{8}\}$.
This lattice is equipped with the \emph{intersection product} defined on the generators by 
$\mathcal{H}_{x}\bullet \mathcal{H}_{x} = \mathcal{H}_{y}\bullet \mathcal{H}_{y} = \mathcal{H}_{x}\bullet \mathcal{F}_{i} =
\mathcal{H}_{y}\bullet \mathcal{F}_{j} = 0$, $\mathcal{H}_{x}\bullet \mathcal{H}_{y} = 1$, and 
$\mathcal{F}_{i}\bullet \mathcal{F}_{j} = - \delta_{ij}$. Using this inner product we can assign to each curve on $\mathcal{X}$ 
its \emph{self-intersection} index, e.g., all exceptional curves $F_{i}$ have index $-1$; such curves are marked in red on 
Figure~\ref{fig:qLaguerre-pt-conf} (right). But there are other curves with self-intersection index $-1$, in particular, proper transforms
$\mathcal{H}_{i}-F_{j}$ of the coordinate lines passing through the blowup points. There are infinitely many $-1$ curves, but curves with higher
negative self-intersection index are more special. Usually, as is in our case, there are only finitely many curves with self-intersection 
index $-2$ that are the \emph{irreducible components} of the \emph{unique anti-canonical divisor class of canonical type} (the generalized
Halphen surface condition, see \cite{Sak:2001:RSAWARSGPE}). Such curves are marked in blue on Figure~\ref{fig:qLaguerre-pt-conf} (right)
and their intersection configuration is described by an affine Dynkin diagram which in our case is $A_{5}^{(1)}$, see 
Figure~\ref{fig:surface-roots-qlaguerre}. This is the \emph{surface type} of our recurrence. 
\begin{figure}[ht]
\begin{equation}\label{eq:d-roots-lw}
	\raisebox{-45pt}{\begin{tikzpicture}[
			elt/.style={circle,draw=black!100,thick, inner sep=0pt,minimum size=2mm}]
		\path 				(0,1)	node 	(d0) 	[elt, label=above:{$\delta_{0}$} ] {}
		        (-{sqrt(3)/2},1/2)	node 	(d1) 	[elt, label=left:{$\delta_{1}$} ] {}
		        (-{sqrt(3)/2},-1/2) node  	(d2)	[elt, label=left:{$\delta_{2}$} ] {}
		     	   			(0,-1)	node  	(d3) 	[elt, label=below:{$\delta_{3}$} ] {}
		        ({sqrt(3)/2},-1/2) 	node  	(d4) 	[elt, label=right:{$\delta_{4}$} ] {}
		        ({sqrt(3)/2},1/2) 	node 	(d5) 	[elt, label=right:{$\delta_{5}$} ] {};
		\draw [black,line width=1pt ] (d0) -- (d1) -- (d2) -- (d3) -- (d4) -- (d5)--(d0);
	\end{tikzpicture}}\qquad 
			\begin{alignedat}{2}
			\delta_{0} &= \mathcal{F}_{3} - \mathcal{F}_{4},&\quad  
			\delta_{3} &=\mathcal{H}_{x}-\mathcal{F}_{6} - \mathcal{F}_{8},\\
			\delta_{1} &=  \mathcal{F}_{4} - \mathcal{F}_{5}, &\quad  \delta_{4} &= \mathcal{F}_{6} - \mathcal{F}_{7},\\
			\delta_{2} &= \mathcal{H}_{y}- \mathcal{F}_{3} - \mathcal{F}_{4}, &\quad  \delta_{5} &=  \mathcal{H}_{x}+\mathcal{H}_{y} - \mathcal{F}_{1} - \mathcal{F}_{2} - \mathcal{F}_{3} - \mathcal{F}_{6}.
			\end{alignedat}
\end{equation}
	\caption{The initial surface root basis for the $q$-Laguerre recurrence}
	\label{fig:surface-roots-qlaguerre}
\end{figure}
 
We can also see the action of the dynamic on the standard basis of the Picard lattice. This is a direct computation summarized in the following Lemma. 

\begin{lemma}\label{lem:q-Lag-Pica-action}
	The induced linear action of the full mapping ${\varphi_{*}}$ on the Picard lattice is given by
	\begin{alignat*}{2}
		\mathcal{H}_{x}& \mapsto \overline{\mathcal{H}}_{x}+ 3\overline{\mathcal{H}}_{y}- \overline{\mathcal{F}}_{123678}, \qquad&
		\mathcal{H}_{y}& \mapsto 3\overline{\mathcal{H}}_{x} + 6\overline{\mathcal{H}}_{y} - 2\overline{\mathcal{F}}_{1237}- \overline{\mathcal{F}}_{45}-3\overline{\mathcal{F}}_{68},  \\
		\mathcal{F}_{1} &\mapsto \overline{\mathcal{H}}_{x}+2\overline{\mathcal{H}}_{y} -\overline{\mathcal{F}}_{23678}, \qquad&
		\mathcal{F}_{2} &\mapsto \overline{\mathcal{H}}_{x} +2\overline{\mathcal{H}}_{y} -\overline{\mathcal{F}}_{13678},\\
		\mathcal{F}_{3} &\mapsto \overline{\mathcal{H}}_{x} +3\overline{\mathcal{H}}_{y} -\overline{\mathcal{F}}_{1235678},\qquad&
		\mathcal{F}_{4} &\mapsto \overline{\mathcal{H}}_{x} +3\overline{\mathcal{H}}_{y} -\overline{\mathcal{F}}_{1234678},\\
		\mathcal{F}_{5} &\mapsto \overline{\mathcal{H}}_{x} +2\overline{\mathcal{H}}_{y} -\overline{\mathcal{F}}_{12678},\qquad&
		\mathcal{F}_{6} &\mapsto \overline{\mathcal{H}}_{x} +2\overline{\mathcal{H}}_{y} -\overline{\mathcal{F}}_{12368}, \\
		\mathcal{F}_{7} &\mapsto \overline{\mathcal{H}}_{y}-\overline{\mathcal{F}}_{8}, \qquad&
		\mathcal{F}_{8} &\mapsto \overline{\mathcal{H}}_{y}-\overline{\mathcal{F}}_{6},
	\end{alignat*}
	where we use the notation $\mathcal{F}_{i\cdots j} = \mathcal{F}_{i} + \cdots \mathcal{F}_{j}$.	
	The evolution of parameters (hence, the base points) is as expected,
	$\mathbf{b}=\{T, Q, q, n\}\mapsto \overline{\mathbf{b}}=\{T, Q, q, n+1\}$.
\end{lemma}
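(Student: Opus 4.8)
The plan is to use that, after the eight blow-ups of Figure~\ref{fig:qLaguerre-pt-conf}, the birational recurrence $\varphi^{(n)}$ lifts to a genuine isomorphism $\varphi:\mathcal{X}_{\mathbf{b}}\to\mathcal{X}_{\overline{\mathbf{b}}}$ of Sakai surfaces, so that the induced $\varphi_{*}:\operatorname{Pic}(\mathcal{X}_{\mathbf{b}})\to\operatorname{Pic}(\mathcal{X}_{\overline{\mathbf{b}}})$ is a well-defined lattice isomorphism preserving the intersection product and the anti-canonical class $-\mathcal{K}_{\mathcal{X}}=2\mathcal{H}_{x}+2\mathcal{H}_{y}-\mathcal{F}_{1}-\cdots-\mathcal{F}_{8}$. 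It then suffices to push each generator $\mathcal{H}_{x},\mathcal{H}_{y},\mathcal{F}_{1},\ldots,\mathcal{F}_{8}$ forward and read off the result, using the standard bookkeeping: an irreducible curve given in the affine chart by a polynomial of degree $d_{x}$ in $x$ and $d_{y}$ in $y$ that passes through the base point $q_{i}$ (or its infinitely near points) with multiplicity $m_{i}$ has class $d_{x}\mathcal{H}_{x}+d_{y}\mathcal{H}_{y}-\sum_{i}m_{i}\mathcal{F}_{i}$.

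First I would compute $\varphi_{*}\mathcal{H}_{x}$ and $\varphi_{*}\mathcal{H}_{y}$ by pushing forward a generic vertical line $\{x_{n}=c\}$ and a generic horizontal line $\{y_{n}=c\}$: apply the full map $\varphi^{(n)}=(\varphi_{2}^{(n+1)})^{-1}\circ\varphi_{1}^{(n)}$, clear denominators in the target coordinates $(x_{n+1},y_{n+1})$, and record the bidegree of the resulting curve. The degree growth is organized by the fibration-preserving structure of the two half-steps, since each of \eqref{eq:y-fwd} and \eqref{eq:x-back} preserves one of the two rulings of $\mathbb{P}^{1}\times\mathbb{P}^{1}$ and acts as a M\"obius transformation on its fibres. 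The multiplicities at the target base points $\overline{q}_{1},\ldots,\overline{q}_{8}$ are then obtained by substituting the (infinitely near) coordinates of \eqref{eq:base-pt-qlaguerre} with $n\mapsto n+1$ into the defining equation of the image curve.

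Next I would compute the images of the exceptional classes $\mathcal{F}_{i}$. The first ingredient is to track where $\varphi$ sends each base point; since $q_{5},q_{7},q_{8}$ are the only $n$-dependent points in \eqref{eq:base-pt-qlaguerre}, the map must reorganize the configuration so that the parameters advance as $\mathbf{b}\mapsto\overline{\mathbf{b}}$ with $n\mapsto n+1$, which is the final assertion of the Lemma. When the exceptional curve $F_{i}$ is carried isomorphically onto the exceptional curve over a target base point one simply gets $\mathcal{F}_{i}\mapsto\overline{\mathcal{F}}_{j}$ (as for $\mathcal{F}_{7},\mathcal{F}_{8}$, up to a ruling correction); the higher-bidegree images such as $\mathcal{F}_{4}\mapsto\overline{\mathcal{H}}_{x}+3\overline{\mathcal{H}}_{y}-\overline{\mathcal{F}}_{1234678}$ arise from the interaction of the exceptional fibres over $q_{1},q_{2},q_{3},q_{6}$ with the $(1,1)$-curve $\{xy=1\}$, and their classes are read off from the total transform $\varphi(F_{i})$ by the same bidegree-and-multiplicity bookkeeping.

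Finally I would confirm the result by the standard consistency checks: $\varphi_{*}$ must be an isometry, $\varphi_{*}\mathcal{C}\bullet\varphi_{*}\mathcal{C}'=\mathcal{C}\bullet\mathcal{C}'$ on all pairs of generators; it must fix $-\mathcal{K}_{\mathcal{X}}$; and it must carry the surface root basis \eqref{eq:d-roots-lw} to a basis spanning the same $A_{5}^{(1)}$ configuration. These constraints strongly over-determine the entries of the induced $10\times10$ matrix and so pin down any remaining ambiguity. I expect the main obstacle to be not the push-forward of the coordinate lines but the multiplicity bookkeeping at the infinitely near base points in the two cascades $q_{3}\leftarrow q_{4}\leftarrow q_{5}$ and $q_{6}\leftarrow q_{7}$: deciding the exact multiplicity with which an image curve meets $F_{4}$, $F_{5}$, and $F_{7}$ requires passing to the successive blow-up charts rather than reading values off $\mathbb{P}^{1}\times\mathbb{P}^{1}$, and it is there that index errors most easily creep in.
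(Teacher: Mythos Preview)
Your proposal is correct and follows essentially the same approach as the paper, which simply labels this lemma ``a direct computation'' and refers the reader to the standard techniques in \cite{DzhFilSto:2020:RCDOPWHWDPE} and \cite{HuDzhChe:2020:PLUEDPE} without spelling out any details. Your write-up in fact gives considerably more than the paper does, and your identification of the infinitely-near multiplicity bookkeeping in the cascades $q_{3}\leftarrow q_{4}\leftarrow q_{5}$ and $q_{6}\leftarrow q_{7}$ as the main point of care is exactly right.
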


The next step is to see if we can match this equation to one of the standard examples. We do it in the next section.

\subsection{Matching the Geometry to a Standard Model and Identifying the Equation} 
\label{sub:matching}
As we have seen in Section~\ref{sec:dP-A5-surf-std}, there are two standard geometric realizations of this surface given in 
\cite{KajNouYam:2017:GAPE}, termed \emph{(a)-model} and \emph{(b)-model}. Both realizations are birationally equivalent, but given that the birational 
representation using the (b)-model, given in Lemma~\ref{lem:bir-repE3-b}, is somewhat simpler than the one for the (a)-model given in Lemma~\ref{lem:bir-repE3-a},
we match our geometry to the one of (b)-model described in Section~\ref{ssub:points-b}. This is done in three steps. First, we make 
some initial change of basis between the Picard lattices for our surface and that of a standard example. This can be done by matching the 
surface roots on Figure~\ref{eq:d-roots-lw} with the one for the (b)-model shown on the right on Figure~\ref{eq:d-roots-KNY-b}.
There are of course many different ways to do so, but at this point it is enough to find \emph{some} identification that we can later adjust using
diagram automorphism elements from $\widetilde{W}(E_{3}^{(1)})$. For example, we can just match the nodes with the same indices on these two diagrams, 
\begin{alignat*}{2}
	\delta_{0} &= \mathcal{F}_{3} - \mathcal{F}_{4} = \mathcal{H}_{f} - \mathcal{E}_{1} - \mathcal{E}_{2},&\qquad  
	\delta_{3} &=\mathcal{H}_{x}-\mathcal{F}_{6} - \mathcal{F}_{8} =  \mathcal{H}_{f}- \mathcal{E}_{5} - \mathcal{E}_{6}, \\
	\delta_{1} &=  \mathcal{F}_{4} - \mathcal{F}_{5} = \mathcal{E}_{2} - \mathcal{E}_{3}, &\qquad  
	\delta_{4} &= \mathcal{F}_{6} - \mathcal{F}_{7} = \mathcal{E}_{6}-\mathcal{E}_{7},\\
	\delta_{2} &= \mathcal{H}_{y}- \mathcal{F}_{3} - \mathcal{F}_{4} = \mathcal{H}_{g} - \mathcal{E}_{2} - \mathcal{E}_{4}, &\qquad  
	\delta_{5} &=  \mathcal{H}_{x}+\mathcal{H}_{y} - \mathcal{F}_{1} - \mathcal{F}_{2} - \mathcal{F}_{3} - \mathcal{F}_{6} = 
	\mathcal{H}_{g} - \mathcal{E}_{6} - \mathcal{E}_{8},
\end{alignat*}
which can be done via the change of basis 
\begin{equation}
	\begin{aligned}
		\mathcal{H}_{f} &= \mathcal{H}_{x}, &\quad \mathcal{E}_{1} &= \mathcal{H}_{x} - \mathcal{F}_{3}, &\quad \mathcal{E}_{3} &= \mathcal{F}_{5}, &\quad 
		\mathcal{E}_{5} &= \mathcal{F}_{8}, &\quad \mathcal{E}_{7} &= \mathcal{F}_{7},\\ 
		\mathcal{H}_{g} &= \mathcal{H}_{x} + \mathcal{H}_{y} - \mathcal{F}_{2} - \mathcal{F}_{3}, &\quad 
		\mathcal{E}_{2} &= \mathcal{F}_{4}, &\quad  \mathcal{E}_{4} &=  \mathcal{H}_{x} - \mathcal{F}_{2}, &\quad \mathcal{E}_{6} &= \mathcal{F}_{6},
		&\quad \mathcal{E}_{8} &= \mathcal{F}_{1}. 
	\end{aligned}
\end{equation}
Using this identification we get the symmetry roots for the $q$-Laguerre recurrence corresponding to the (b)-model symmetry roots  
shown on the left on Figure~\ref{eq:d-roots-KNY-b},
\begin{equation}
	\begin{aligned}
			\alpha_{0} &= \mathcal{H}_{f} + \mathcal{H}_{g} - \mathcal{E}_{2}-\mathcal{E}_{3} - \mathcal{E}_{6} - \mathcal{E}_{7} &&= 
			2 \mathcal{H}_{x} + \mathcal{H}_{y} - \mathcal{F}_{2} - \mathcal{F}_{3} - \mathcal{F}_{4} - \mathcal{F}_{5} - \mathcal{F}_{6} - \mathcal{F}_{7}	\\		
			\alpha_{1} &= \mathcal{H}_{f}  - \mathcal{E}_{4} - \mathcal{E}_{8} &&= \mathcal{F}_{2} - \mathcal{F}_{1}, \\
			\alpha_{2} &= \mathcal{H}_{g} - \mathcal{E}_{1} - \mathcal{E}_{5} &&=\mathcal{H}_{y} - \mathcal{F}_{2} - \mathcal{F}_{8},  \\
			\alpha_{3} &=\mathcal{H}_{f} + \mathcal{H}_{g} - \mathcal{E}_{2} - \mathcal{E}_{3} - \mathcal{E}_{5} - \mathcal{E}_{8} &&=
			2 \mathcal{H}_{x} + \mathcal{H}_{y} - \mathcal{F}_{1} - \mathcal{F}_{2} - \mathcal{F}_{3} - \mathcal{F}_{4} - \mathcal{F}_{5} - \mathcal{F}_{8},	\\
			\alpha_{4} &= \mathcal{H}_{f} + \mathcal{H}_{g} - \mathcal{E}_{1} - \mathcal{E}_{4} - \mathcal{E}_{6} - \mathcal{E}_{7} &&=
			\mathcal{H}_{y} - \mathcal{F}_{6} - \mathcal{F}_{7}.
	\end{aligned}
\end{equation}
Now we can, using the action of $\varphi_{*}$ on $\operatorname{Pic}(\mathcal{X})$ given in Lemma~\ref{lem:q-Lag-Pica-action}, identify the translation element on 
the symmetry sublattice; 
\begin{equation*}
\varphi_*:\upalpha=\langle\alpha_0,\alpha_1,\alpha_2;\alpha_3,\alpha_{4}\rangle\mapsto
\varphi_*(\upalpha)=\upalpha+ \langle -1,0,1;-1,1 \rangle\delta.
\end{equation*}
From this we immediately see that the $q$-Laguerre dynamic is \emph{not conjugated} to either of the standard examples, since it occurs on both sublattices, but it is almost 
the same as their composition, we just need to adjust the change of basis, which is the second step. 

Consider the automorphism $\sigma_{2}^{2}$, see Figure~\ref{fig:automs}, which acts on the symmetry roots as 
\begin{equation*}
\sigma_{2}^{2}(\langle\alpha_0,\alpha_1,\alpha_2;\alpha_3,\alpha_{4}\rangle) = \langle\alpha_2,\alpha_0,\alpha_1;\alpha_3,\alpha_{4}\rangle,
\end{equation*}
so it transforms the dynamic into $[01\overline{1}\overline{1}1]$, and that is the composition of two standard translations. Acting by this automorphism 
 on our change of basis creates a somewhat less obvious change of basis, 
\begin{equation}\label{eq:change-basis-final}
	\begin{aligned}
		\mathcal{H}_{f} &= \mathcal{H}_{x} + \mathcal{H}_{y} - \mathcal{F}_{2} - \mathcal{F}_{3}, &\qquad 
					\mathcal{H}_{x} &= \mathcal{H}_{f} + \mathcal{H}_{g} - \mathcal{E}_{2} - \mathcal{E}_{6}, \\ 
		\mathcal{H}_{g} &= 2\mathcal{H}_{x} + \mathcal{H}_{y} - \mathcal{F}_{2} - \mathcal{F}_{3}- \mathcal{F}_{4} - \mathcal{F}_{6}, &\qquad 
					\mathcal{H}_{y} &= 2\mathcal{H}_{f} + \mathcal{H}_{g} - \mathcal{E}_{2} - \mathcal{E}_{4}  - \mathcal{E}_{6} - \mathcal{E}_{7}, \\ 
		\mathcal{E}_{1} &= \mathcal{F}_{7}, &\qquad 
							\mathcal{F}_{1} &= \mathcal{E}_{3},\\
		\mathcal{E}_{2} &= \mathcal{H}_{x} + \mathcal{H}_{y} - \mathcal{F}_{2} - \mathcal{F}_{3} - \mathcal{F}_{6}, &\qquad 
							\mathcal{F}_{2} &= \mathcal{H}_{f} + \mathcal{H}_{g} - \mathcal{E}_{2} - \mathcal{E}_{6} - \mathcal{E}_{7},\\
		\mathcal{E}_{3} &= \mathcal{F}_{1}, &\qquad 
							\mathcal{F}_{3} &=  \mathcal{H}_{f} + \mathcal{H}_{g} - \mathcal{E}_{2} - \mathcal{E}_{4} - \mathcal{E}_{6},\\
		\mathcal{E}_{4} &= \mathcal{H}_{x} - \mathcal{F}_{3}, &\qquad 
							\mathcal{F}_{4} &= \mathcal{H}_{f} - \mathcal{E}_{6},\\
		\mathcal{E}_{5} &= \mathcal{F}_{5}, &\qquad 
							\mathcal{F}_{5} &= \mathcal{E}_{5},\\
		\mathcal{E}_{6} &= \mathcal{H}_{x} + \mathcal{H}_{y} - \mathcal{F}_{2} - \mathcal{F}_{3} - \mathcal{F}_{4}, &\qquad 
							\mathcal{F}_{6} &= \mathcal{H}_{f} - \mathcal{E}_{2},\\
		\mathcal{E}_{7} &= \mathcal{H}_{x} - \mathcal{F}_{2}, &\qquad 
							\mathcal{F}_{7} &= \mathcal{E}_{1},\\
		\mathcal{E}_{8} &= \mathcal{F}_{8}, &\qquad 
							\mathcal{F}_{8} &= \mathcal{E}_{8}.
	\end{aligned}
\end{equation}

\begin{figure}[h]
\begin{equation}\label{eq:roots-q-Laguerre}
	\raisebox{-45pt}{\begin{tikzpicture}[
			elt/.style={circle,draw=black!100,thick, inner sep=0pt,minimum size=2mm}]
			\path (-0.6,-0.3) node (a1) [elt] {}
			( 0.6,-0.3) node (a2) [elt] {}
			( 0,0.67) node (a0) [elt] {}
			( -0.6,-1.2) node (a3) [elt] {}
			( 0.6,-1.2) node (a4) [elt] {};
            \draw [black] (a1) -- (a2) -- (a0) -- (a1);
			\draw [black,double distance=2pt] (a3) -- (a4);
            \node at ($(a1.west) + (-0.3,0.0)$) {$\alpha_{1}$};
            \node at ($(a2.east) + (0.3,0.0)$) {$\alpha_{2}$};
            \node at ($(a0.north) + (0,0.3)$) {$\alpha_{0}$};
            \node at ($(a3.south) + (0,-0.3)$) {$\alpha_{3}$};
            \node at ($(a4.south) + (0,-0.3)$) {$\alpha_{4}$};			
	\end{tikzpicture}}\quad 
			\begin{aligned}
			\alpha_{0} &= \mathcal{F}_{2} - \mathcal{F}_{1},	\\		
			\alpha_{1} &= \mathcal{H}_{y} - \mathcal{F}_{28}, \\
			\alpha_{2} &= 2 \mathcal{H}_{x} + \mathcal{H}_{y} - \mathcal{F}_{234567},  \\
			\alpha_{3} &= 2 \mathcal{H}_{x} + \mathcal{H}_{y} - \mathcal{F}_{123456},\\
			\alpha_{4} &=\mathcal{H}_{y} - \mathcal{F}_{67},\\
			\end{aligned}
	\qquad	\qquad 	
	\raisebox{-45pt}{\begin{tikzpicture}[
			elt/.style={circle,draw=black!100,thick, inner sep=0pt,minimum size=2mm}]
		\path 				(0,1)	node 	(d0) 	[elt, label=above:{$\delta_{0}$} ] {}
		        (-{sqrt(3)/2},1/2)	node 	(d1) 	[elt, label=left:{$\delta_{1}$} ] {}
		        (-{sqrt(3)/2},-1/2) node  	(d2)	[elt, label=left:{$\delta_{2}$} ] {}
		     	   			(0,-1)	node  	(d3) 	[elt, label=below:{$\delta_{3}$} ] {}
		        ({sqrt(3)/2},-1/2) 	node  	(d4) 	[elt, label=right:{$\delta_{4}$} ] {}
		        ({sqrt(3)/2},1/2) 	node 	(d5) 	[elt, label=right:{$\delta_{5}$} ] {};
		\draw [black,line width=1pt ] (d0) -- (d1) -- (d2) -- (d3) -- (d4) -- (d5)--(d0);
	\end{tikzpicture}}\quad 
			\begin{aligned}
			\delta_{0} &= \mathcal{F}_{6} - \mathcal{F}_{7},	\\		
			\delta_{1} &= \mathcal{H}_{x}+\mathcal{H}_{y} - \mathcal{F}_{1236} \\
			\delta_{2} &= \mathcal{F}_{3} - \mathcal{F}_{4},  \\
			\delta_{3} &=\mathcal{F}_{4} - \mathcal{F}_{5},\\
			\delta_{4} &= \mathcal{H}_{y}- \mathcal{F}_{34},\\
			\delta_{5} &= \mathcal{H}_{x}-\mathcal{F}_{68}.
			\end{aligned}
\end{equation}
	\caption{The final choice of the symmetry (left) and the surface (right) root bases for the $q$-Laguerre surface}
	\label{fig:roots-bases-qL}
\end{figure}

Applying the change of basis \eqref{eq:change-basis-final} to the root bases for the (b)-model surface on Figure~\ref{fig:roots-bases-b} we get the  
adjusted root bases for the $q$-Laguerre recurrence shown on Figure~\ref{fig:roots-bases-qL}.

Now the action of the mapping $\varphi$ on the symmetry roots $\alpha_{i}$ becomes
\begin{equation}\label{eq:q-Laguerre}
\varphi_*:\upalpha=\langle\alpha_0,\alpha_1,\alpha_2;\alpha_3,\alpha_{4}\rangle\mapsto
\varphi_*(\upalpha)=\upalpha+ \langle 0,1,-1;-1,1 \rangle\delta = (\phi_{*}\circ\psi_{*}) (\upalpha) = (\psi_{*}\circ\phi_{*}) (\upalpha).
\end{equation}

We can express $\varphi$ in terms of generators of the extended affine Weyl group $\widetilde{W}(E_{3}^{(1)})$ as
\begin{equation}\label{eq:qLag-decomp}
	\varphi = \sigma_{1}\circ \sigma_{0}\circ w_{3}\circ w_{0}\circ w_{2} = \psi \circ \phi = (w_{4}\circ \sigma_{2}^{3})\circ (\sigma_{2}^{4}\circ w_{0}\circ w_{2})
\end{equation}

\begin{remark} Given that $\varphi = \varphi_{2}^{-1}\circ \varphi_{1}$, see \eqref{eq:y-fwd}--\eqref{eq:x-back}, it is a natural question whether this decomposition 
	is the same as $\varphi =\psi \circ \phi$. A quick reflection suggest that this can not be the case, and indeed, we can compute that
	\begin{alignat*}{2}
		(\varphi_{1})_{*}(\upalpha) &= \langle-\alpha_0,\alpha_{0} + \alpha_2,\alpha_0 + \alpha_1;\alpha_4,\alpha_{3}\rangle, &\qquad  
		\varphi_{1} &= \sigma_{0}\circ \sigma_{2}^{2}\circ w_{0}\\
		(\varphi_{2})_{*}(\upalpha) &= \langle-\alpha_0,-\alpha_{1},2 \alpha_{0} + 2 \alpha_{1} + \alpha_2;\alpha_3 + 2\alpha_4,-\alpha_{4}\rangle, &\qquad 	
		\varphi_{2} &= \sigma_{0}\circ \sigma_{2}\circ w_{4} \circ w_{0}\circ w_{1}\circ w_{0},
	\end{alignat*}
	and so the backward map $\varphi_{2}$ acts on both components of the $E_{3}^{(1)}$ affine Dynkin diagram.
	
\end{remark}

The final step is to determine the explicit change of coordinates. For that, we also need to compute the root variables for the $q$-Laguerre case, 
which is done in the following Lemma.

\begin{lemma}\label{lem:q-Laguerre-root-vars} 
	The points $q_{1},\ldots,q_{8}$ lie on the polar divisor of the symplectic form $\omega$ that, in the affine $(x,y)$-chart, is given by 
	\begin{equation}\label{eq:q-Laguerre-sympl}
		\omega = k 	\frac{dx \wedge dy}{xy -1}
	\end{equation}
	\begin{enumerate}[(i)]
		\item The residues of the symplectic form $\omega$ along the irreducible components $d_{i}$ of the polar divisor corresponding to the 
		surface roots $\delta_{i} = [d_{i}]$, given in \eqref{eq:roots-q-Laguerre},  are 
		\begin{align*}
			\operatorname{res}_{d_{0}} \omega &= -k\, \frac{dv_{6}}{v_{6}-1}, &\qquad 
				 \operatorname{res}_{d_{2}} \omega &=k\, \frac{dv_{3}}{v_{3}(v_{3}-1)}, &\qquad   \operatorname{res}_{d_{4}} \omega &= k\,\frac{dx}{x},\\ 
			\operatorname{res}_{d_{1}} \omega &= k\, \frac{dy}{y}, &\qquad  \operatorname{res}_{d_{3}} \omega &= -k\, \frac{dv_{4}}{v_{4}},&\qquad   
			\operatorname{res}_{d_{5}} \omega &=  -k\, \frac{dy}{y},
		\end{align*}
		\item For the standard root variable normalization $\exp(\chi(\delta)) = a_{0}a_{1}a_{2} = a_{3}a_{4} =q$ we need to take 
		$k=-1$ and the root variables $a_{i}$ are then given by 
		\begin{equation}\label{eq:qLaguerre-root-vars}
			a_{0} = T,\quad a_{1} = \frac{1}{q^{n}T},\quad a_{2} = q^{n+1};\qquad a_{3} = q^{n+1}Q,\quad a_{4}=\frac{1}{q^{n}Q}.
		\end{equation}
		Note that the discrete time evolution $n\mapsto n+1$ induces the evolution 
		\begin{equation*}
			\overline{a}_{0} = a_{0},\quad \overline{a}_{1} = \frac{a_{1}}{q},\quad \overline{a}_{2} = q a_{2};\qquad \overline{a}_{3} = q a_{3},\quad \overline{a}_{4} = \frac{a_{4}}{q}
		\end{equation*}	
		matching the translation \eqref{eq:q-Laguerre}	(recall that the evolution of the root variables is dual to the evolution of the roots). 	
	\end{enumerate}
\end{lemma}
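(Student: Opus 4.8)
The plan is to follow the standard period-map computation of Sakai's theory, as detailed in \cite{Sak:2001:RSAWARSGPE} and \cite{DzhTak:2018:SASGTDPE}, specialized to our surface. The starting point is the symplectic form \eqref{eq:q-Laguerre-sympl}: I would first verify that $\omega = k\,\frac{dx\wedge dy}{xy-1}$ is indeed, up to the scalar $k$, the unique meromorphic $2$-form with simple poles precisely along the anti-canonical divisor $\sum_{i}\delta_{i}$ of \eqref{eq:roots-q-Laguerre} (equivalently, a nonvanishing section of $\Omega^{2}_{\mathcal{X}}(-\mathcal{K}_{\mathcal{X}})\cong\mathcal{O}_{\mathcal{X}}$). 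In the affine $(x,y)$-chart the only pole is along the $(1,1)$-curve $xy-1=0$, which is the component $d_{1}$; the remaining poles, along the coordinate lines at infinity and along the exceptional curves, are checked by passing to the blow-up charts at the base points \eqref{eq:base-pt-qlaguerre} and confirming that $\omega$ acquires exactly a simple pole along each $d_{i}$ and is regular and nonvanishing elsewhere.

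For part (i) I would compute each residue directly. On the $(1,1)$-curve $d_{1}$ I parametrize by $y$ (so $x=1/y$) and read off $\operatorname{res}_{d_{1}}\omega = k\,dy/y$; on the coordinate lines $d_{4}$ and $d_{5}$ the residue is computed from the local defining equation of the line, giving $k\,dx/x$ and $-k\,dy/y$ respectively. The three exceptional components $d_{0}=\mathcal{F}_{6}-\mathcal{F}_{7}$, $d_{2}=\mathcal{F}_{3}-\mathcal{F}_{4}$, and $d_{3}=\mathcal{F}_{4}-\mathcal{F}_{5}$ require introducing blow-up coordinates $v_{6},v_{3},v_{4}$ along the exceptional $\mathbb{P}^{1}$'s over the cascades $q_{6}\leftarrow q_{7}$ and $q_{3}\leftarrow q_{4}\leftarrow q_{5}$, pulling back $\omega$, and taking the residue along the exceptional divisor. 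A convenient consistency check at each step is that the two simple poles of $\operatorname{res}_{d_{i}}\omega$ must sit exactly at the two points where $d_{i}$ meets its neighbours $d_{i\pm1}$ in the hexagonal $A_{5}^{(1)}$ cycle (e.g.\ $d_{1}$ meets $d_{0}$ at $q_{6}$, where $y=0$, and $d_{2}$ at $q_{3}$, where $y=\infty$).

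For part (ii) I would fix $k$ and extract the root variables from the period map $\chi$, using that for a symmetry root expressible as a difference of two exceptional classes over base points lying on a common anti-canonical component, $\chi$ is the integral of the corresponding residue $1$-form between those base points. The cleanest instance is $\alpha_{0}=\mathcal{F}_{2}-\mathcal{F}_{1}$, whose points $q_{1}=(1,1)$ and $q_{2}=(T,1/T)$ both lie on $d_{1}$, giving $\chi(\alpha_{0})=k\int_{y=1}^{y=1/T}dy/y=-k\log T$ and hence $a_{0}=T^{-k}$. Carrying out the analogous integrals for $\alpha_{1},\ldots,\alpha_{4}$ expresses every $a_{i}$ as a power $(\cdots)^{-k}$ of a monomial in $q,Q,T$, and imposing the normalization $a_{0}a_{1}a_{2}=q$ reduces to $q^{-k}=q$, forcing $k=-1$ and yielding \eqref{eq:qLaguerre-root-vars}. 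The evolution is then a one-line substitution: replacing $n$ by $n+1$ in \eqref{eq:qLaguerre-root-vars} gives $\overline{a}_{0}=a_{0}$, $\overline{a}_{1}=a_{1}/q$, $\overline{a}_{2}=qa_{2}$, $\overline{a}_{3}=qa_{3}$, $\overline{a}_{4}=a_{4}/q$, which is dual to the translation \eqref{eq:q-Laguerre}.

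The main obstacle I anticipate is the residue computation along the exceptional components over the infinitely near points: these require carefully chosen blow-up coordinates so that the pulled-back form has a manifest simple pole, and the bookkeeping of the cascade structure ($q_{3}\leftarrow q_{4}\leftarrow q_{5}$ and $q_{6}\leftarrow q_{7}$) is where sign and normalization errors are most likely to creep in. Correctly propagating these normalizations into the period integrals for the symmetry roots that are \emph{not} simply differences of two points on a single component (namely $\alpha_{1}$, $\alpha_{2}$, $\alpha_{3}$, which mix the $\mathcal{H}$-classes with several $\mathcal{F}_{i}$) is the second delicate point; here one must track the $q$-dependent positions from \eqref{eq:base-pt-qlaguerre} to arrive at the stated powers of $q$, $Q$, and $T$.
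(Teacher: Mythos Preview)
Your proposal is correct and follows precisely the standard period-map computation that the paper has in mind: the paper's own proof reads, in its entirety, ``Proof is a direct computation and is omitted,'' so your outline is in fact more detailed than what the paper supplies. Your identification of which $d_{i}$ are exceptional components versus coordinate lines versus the $(1,1)$-curve is accurate, your sample residue and period computations check out against the stated results, and your strategy for fixing $k$ via the normalization $a_{0}a_{1}a_{2}=q$ is the right one.
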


Proof is a direct computation and is omitted. We are now in the position to determine the birational change of variables identifying the 
$q$-Laguerre surfaces with the standard Sakai surface using the (b)-model. It is given in the following Lemma.

\begin{lemma}\label{lem:qLag-bPain-coord-change}
	The Sakai surface shown on Figure~\ref{fig:qLaguerre-pt-conf} can be matched with the (b)-model shown on Figure~\ref{fig:KNY-b-pt-conf} via
	the following explicit change of coordinates and parameter identification:
	\begin{equation}\label{eq:Charlier2Sakai-coord}
		\left\{
		\begin{aligned}
			x_{n}(f,g) &= - \frac{f_{n}}{g_{n} \nu_{2}\nu_{3}},\\
			y_{n}(f,g) &= - \frac{\nu_{2}\nu_{3}}{\kappa_{1}\kappa_{2}}\, \frac{\kappa_{1}\kappa_{2}g_{n}(f_{n}-\nu_{4})-\nu_{4}\nu_{6}\nu_{7}}{f_{n}^{2}},\\
			T&=\frac{\kappa_{1}\kappa_{2}}{\nu_{2}\nu_{3}\nu_{6}\nu_{7}},\quad  Q = \frac{\kappa_{1}\nu_{1}}{\nu_{2}\nu_{3}\nu_{8}},	\\
			q^{n} &= \frac{\nu_{2}\nu_{3}\nu_{4}\nu_{6}\nu_{7}\nu_{8}}{\kappa_{1}^{2}\kappa_{2}},		
		\end{aligned}
		\right. 
		\quad 
		\left\{
		\begin{aligned}
		f_{n}(x,y) &= \nu_{4}\frac{x_{n}-T}{T(x_{n}y_{n}-1)},\\
		g_{n}(x,y) &= -\frac{\nu_{4}}{\nu_{2}\nu_{3}}\frac{x_{n}-T}{T x_{n}(x_{n}y_{n} - 1)},\\
		\kappa_{1}&=\frac{\nu_{4}\nu_{8}}{q^{n}T},\quad  \kappa_{2} = \frac{\nu_{2}\nu_{3}\nu_{5}q^{2n+1}QT}{\nu_{4}},\\
		\nu_{1}&=\frac{q^{n}QT \nu_{2}\nu_{3}\nu_{5}}{\nu_{4}},\quad \nu_{6}=\frac{q^{n+1}Q\nu_{5}\nu_{8}}{T \nu_{7}},			
		\end{aligned}
		\right.
	\end{equation}
	As before, we can rescale the parameters $\nu_{2}\nu_{3}$ and $\nu_{4}$ to be equal to $1$, the remaining parameters $\nu_{i}$ 
	do not play a role in the variable change and get canceled in the birational representation.
\end{lemma}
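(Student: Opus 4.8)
The plan is to lift the lattice-level identification \eqref{eq:change-basis-final}, which already matches the surface and symmetry root bases of the $q$-Laguerre surface (Figure~\ref{fig:qLaguerre-pt-conf}) with those of the (b)-model (Figure~\ref{fig:KNY-b-pt-conf}), to an honest birational map $\Phi\colon(f,g)\mapsto(x,y)$ inducing it on $\operatorname{Pic}(\mathcal{X})$, and then to read off the parameter dictionary in \eqref{eq:Charlier2Sakai-coord} by equating the two sets of root variables. Since every step of the procedure of \cite{DzhFilSto:2020:RCDOPWHWDPE} is reversible, it suffices to produce $\Phi$, to check that it intertwines the two blow-up data \eqref{eq:base-pt-qlaguerre} and the (b)-model configuration of Figure~\ref{fig:KNY-b-pt-conf}, and that it pulls back the symplectic form; the parameter matching is then a purely algebraic byproduct.

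First I would recover the shape of $\Phi$ from the images of the coordinate classes under \eqref{eq:change-basis-final}. The relation $\mathcal{H}_{x}=\mathcal{H}_{f}+\mathcal{H}_{g}-\mathcal{E}_{2}-\mathcal{E}_{6}$ forces the level sets $\{x=\mathrm{const}\}$ to be the pencil of $(1,1)$-curves through $p_{2}=(\infty,\infty)$ and $p_{6}=(0,0)$, that is, the pencil of lines through the origin, so that $x=\lambda\, f/g$ for a single scalar $\lambda$; this is exactly the first line of \eqref{eq:Charlier2Sakai-coord}. Likewise $\mathcal{H}_{y}=2\mathcal{H}_{f}+\mathcal{H}_{g}-\mathcal{E}_{2}-\mathcal{E}_{4}-\mathcal{E}_{6}-\mathcal{E}_{7}$ identifies $y$ as a ratio of two sections of the pencil of $(2,1)$-curves through the base points prescribed by this class; solving these incidence conditions pins down $y$ up to a M\"obius transformation and produces the second line of \eqref{eq:Charlier2Sakai-coord}. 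The remaining scalars, together with the parameter dictionary, are then fixed by two independent requirements: the base points must correspond as dictated by \eqref{eq:change-basis-final} (for instance the single-exceptional identifications $\mathcal{E}_{8}=\mathcal{F}_{8}$ and $\mathcal{E}_{3}=\mathcal{F}_{1}$ force $\Phi(p_{8})=q_{8}$ and $\Phi(p_{3})=q_{1}$), and the two symplectic forms must agree, i.e.\ $\Phi^{*}\bigl(-\tfrac{dx\wedge dy}{xy-1}\bigr)=\tfrac{df\wedge dg}{fg}$, consistent with \eqref{eq:q-Laguerre-sympl} and $k=-1$.

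With $\Phi$ in hand I would carry out the verification by direct substitution. Plugging the explicit $f_{n}(x,y)$ and $g_{n}(x,y)$ of \eqref{eq:Charlier2Sakai-coord} into the defining equations of the (b)-model base points must reproduce, precisely when the parameters are identified as in \eqref{eq:Charlier2Sakai-coord}, the $q$-Laguerre configuration \eqref{eq:base-pt-qlaguerre} --- including the infinitely-near cascades $q_{3}\leftarrow q_{4}\leftarrow q_{5}$ and $q_{6}\leftarrow q_{7}$ on one side and $p_{2}\leftarrow p_{3}$, $p_{6}\leftarrow p_{7}$ on the other. The parameter identification itself is then read off purely algebraically: equating the $q$-Laguerre root variables \eqref{eq:qLaguerre-root-vars} with the (b)-model root variables \eqref{eq:root-vars-KNY-b} gives $T=\tfrac{\kappa_{1}\kappa_{2}}{\nu_{2}\nu_{3}\nu_{6}\nu_{7}}$, $Q=\tfrac{\kappa_{1}\nu_{1}}{\nu_{2}\nu_{3}\nu_{8}}$ and $q^{n}=\tfrac{\nu_{2}\nu_{3}\nu_{4}\nu_{6}\nu_{7}\nu_{8}}{\kappa_{1}^{2}\kappa_{2}}$ together with the inverse relations, and one checks that the auxiliary $\nu_{i}$ drop out, leaving only the rescaling normalization $\nu_{2}\nu_{3}=\nu_{4}=1$, as asserted.

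The step I expect to be the genuine obstacle is the faithful bookkeeping of the infinitely-near base points. Matching the generic bidegrees and the finite base points is essentially automatic from the Picard-lattice data, but confirming that $\Phi$ carries each blow-up flag to the correct flag --- that the tangent directions and second-order jets at $p_{2},p_{6}$ and at $q_{3},q_{6}$ match under the simultaneous parameter substitution --- requires expanding $\Phi$ in local charts and is precisely where an incorrect scalar or a mismatched parameter would reveal itself. Once these local expansions check out, $\Phi$ is an isomorphism of the two Sakai surface families, and the equality of the induced translations recorded in \eqref{eq:q-Laguerre} then identifies the $q$-Laguerre dynamic with the composition $\psi\circ\phi$ on the (b)-model, as in Theorem~\ref{thm:main}.
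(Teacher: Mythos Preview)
Your proposal is correct and follows essentially the same route as the paper: read off $x$ and $y$ as projective coordinates on the pencils $|\mathcal{H}_{x}|=|\mathcal{H}_{f}+\mathcal{H}_{g}-\mathcal{E}_{2}-\mathcal{E}_{6}|$ and $|\mathcal{H}_{y}|=|2\mathcal{H}_{f}+\mathcal{H}_{g}-\mathcal{E}_{2}-\mathcal{E}_{4}-\mathcal{E}_{6}-\mathcal{E}_{7}|$ dictated by \eqref{eq:change-basis-final}, fix the M\"obius freedom by tracking exceptional divisors (e.g.\ $\mathcal{E}_{3}\mapsto\mathcal{F}_{1}$ forcing $p_{3}\mapsto q_{1}$), and obtain the parameter dictionary by equating the root variables \eqref{eq:root-vars-KNY-b} and \eqref{eq:qLaguerre-root-vars}. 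Your additional emphasis on verifying the symplectic pullback and the infinitely-near flags is a sound sanity check but goes slightly beyond what the paper writes out explicitly.
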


\begin{proof}
	The parameter matching is done via the root variables expressions \eqref{eq:root-vars-KNY-b} and \eqref{eq:qLaguerre-root-vars}. 
	We get the explicit change of coordinates from the change of basis \eqref{eq:change-basis-final} in the Picard lattice. Let us briefly recall the key steps
	using the simplest case.
	To find $x(f,g)$ we need to find the basis of the pencil of $(1,1)$-curves $|H_{x}| = |H_{f} + H_{g} - \mathcal{E}_{2} - \mathcal{E}_{6}|$ on the $(f,g)$-plane
	passing through the points $p_{2}(\infty,\infty)$ and $p_{6}(0,0)$ on Figure~\ref{fig:KNY-b-pt-conf}, which has the form $a_{10}f + a_{01} g = 0$. 
	A homogeneous coordinate on this pencil is $[f:g]$, which coincides with $x$ up to some M\"obius transformation. Thus, 
	\begin{equation*}
		x(f,g) = \frac{A f + B g}{C f + Dg}.
	\end{equation*}
	We determine the coefficients $A,\ldots,D$ using the mapping of the exceptional divisors. For example, 
	\begin{equation*}
		[(f=0)] = \mathcal{H}_{f} - \mathcal{E}_{5} - \mathcal{E}_{6} = \mathcal{F}_{4} - \mathcal{F}_{5},
	\end{equation*}
	so in affine charts the map should collapse $f=0$ to $q_{3}(0,\infty)\leftarrow q_{4}$. Thus, $B=0$. Similarly, $g=0$ should map to $x=\infty$, and so $C=0$. 
	Thus, $x(f,g) = (Af)/(Dg) = A (f/g)$, since we can, without loss of generality, put $D=1$. To find $A$ we note that $p_{3}$ should map to $q_{1}$, 
	and thus $x(f,g)(p_{3}) = A (f/g)(p_3) = A v_{3}(p_3) = A(-\nu_{2}\nu_{3}) = 1$, so $A=-(\nu_{2}\nu_{3})^{-1}$. Other computations are done along the same lines, 
	but are more involved, see \cite{DzhFilSto:2020:RCDOPWHWDPE} or \cite{HuDzhChe:2020:PLUEDPE} for more detailed examples. 	
\end{proof}

\begin{remark} Using the birational representation of $\widetilde{W}(E_{3}^{(1)})$, we can write the mapping 
	$\varphi = \sigma_{1}\circ \sigma_{0}\circ w_{3}\circ w_{0}\circ w_{2}$ explicitly. For simplicity, we write it using the root variables
	of the (b)-model, but it is still quite complicated, 
	\begin{equation}\label{eq:qLag-std}
		\begin{aligned}
		\varphi(f,g) &= \left(
			\frac{f(a_{0}a_{2}(f(1-g) + a_{1}(a_{2}g - 1))+ a_{4} g(1 - a_{2} g))}{a_{0}a_{1}a_{2}g(a_{2}g-1)(a_{0}a_{2}f - a_{4}(1 - a_{2}g))}, \right.\\
			&\quad\left. \frac{f(a_{0}a_{2}f(g-1)+a_{4}g(a_{2}g-1))(a_{0}a_{2}(f(g-1)-a_{1}(a_{2}g-1))+a_{4}g(a_{2}g-1))}{
			a_{4}(a_{2}g-1)(a_{0}a_{2}f(f(g-1)+a_{1}(a_{2}g-1)(a_{0}a_{2}g-1))+a_{4}g(a_{2}g-1)(f + a_{0}a_{1}a_{2}(a_{2}g-1)))}
		\right)
		%
		%
		\end{aligned}
	\end{equation}
	It is straightforward to verify that the change of coordinates 	\eqref{eq:Charlier2Sakai-coord} transforms this mapping into our 
	recurrence \eqref{eq:qxn-back-for}, when expressed in the evolutionary form. The dynamic in parameters $\kappa_{i}$ and $\nu_{j}$ can be
	obtained using \eqref{eq:root-vars-KNY-b}.	
\end{remark}	
	
\section{Conclusions} 
\label{sec:conclusions}
In this paper we studied a recurrence relation appearing in the study of deformed $q$-Laguerre orthogonal polynomials and identified it as a discrete Painlev\'e 
equation on the $A_{5}^{(1)}$ family of Sakai surfaces. An interesting feature of this example is that this equation in a composition of two standard mappings, and
so recognizing it as such without the use of geometric techniques may be very difficult, if not impossible. In contrast, the geometric identification procedure
makes this process quite straightforward and effectively removes any guesswork. Given that this deformed $q$-Laguerre dynamic is a composition of a 
discrete $q$-$\Pain{IV}$ and discrete $q$-$\Pain{III}$ equations, it is an interesting follow-up question to see if it has a good continuous limit. 

\appendix

\providecommand{\bysame}{\leavevmode\hbox to3em{\hrulefill}\thinspace}
\providecommand{\MR}{\relax\ifhmode\unskip\space\fi MR }
\providecommand{\MRhref}[2]{%
  \href{http://www.ams.org/mathscinet-getitem?mr=#1}{#2}
}
\providecommand{\href}[2]{#2}

\end{document}